\tikzset{dotmark/.style={circle,fill,inner sep=1.5pt}}
\tikzset{emptymark/.style={circle,draw,fill=white,inner sep=1.5pt}}
\tikzset{crossmark/.style={thick,inner sep=1.5pt}}
\newcommand{\Z}{\mathbb{Z}}
\newcommand{\Zz}{\Z_{\ge 0}}
\newcommand{\Zp}{\Z_{>0}}
\newcommand{\Oh}{\mathcal{O}}
\newcommand{\bigO}{\Oh}
\newcommand{\dd}{.\,.}
\newcommand{\sub}{\subseteq}
\newcommand{\sm}{\setminus}
\newcommand{\T}{T}
\newcommand{\G}{\mathcal{G}} 
\newcommand{\Symb}{\mathcal{A}}
\newcommand{\Act}{\mathcal{B}}
\newcommand{\Left}{\mathcal{L}}
\newcommand{\Right}{\mathcal{R}}
\newcommand{\rle}{\mathsf{rle}}
\newcommand{\shrink}{\mathsf{shrink}}
\newcommand{\pc}{\mathsf{pc}}
\newcommand{\Pres}{\mathcal{S}}
\newcommand{\PSeq}{\mathsf{PSeq}}
\newcommand{\N}{\mathcal{N}}
\renewcommand{\S}{\mathcal{S}}
\newcommand{\Tr}{\mathcal{T}}
\newcommand{\uTr}{\overline{\mathcal{T}}}
\newcommand{\rhs}{\mathsf{rhs}}
\newcommand{\hX}{\bar{X}}
\newcommand{\emptystring}{\varepsilon}
\newcommand{\lvl}{\mathsf{lvl}}
\newcommand{\Occ}{\mathsf{Occ}}
\newcommand{\LCE}{\mathsf{LCE}}
\newcommand{\len}[1]{\| #1 \|}
\newcommand{\nil}{\mathbf{nil}}
\newcommand{\symb}[1]{#1.\mathsf{symb}}
\newcommand{\pos}[1]{#1.\mathsf{pos}}
\newcommand{\nexp}[1]{\exp(#1)}
\newcommand{\parent}[1]{#1.\mathtt{parent}}
\newcommand{\leaf}[1]{\mathtt{leaf}(#1)}
\newcommand{\descend}[2]{#1.\mathtt{leaf}(#2)}
\newcommand{\idx}[2]{#1.\mathtt{index}(#2)}
\newcommand{\child}[2]{#1.\mathtt{child}(#2)}
\newcommand{\sibling}[2]{#1.\mathtt{sibling}(#2)}
\newcommand{\level}[1]{#1.\mathsf{level}}
\newcommand{\nxt}[1]{#1.\mathtt{next}()}
\newcommand{\prv}[1]{#1.\mathtt{prev}()}
\def\fragmentco#1#2{\bm{[}\,#1\,\bm{.\,.}\,#2\,\bm{)}}
\def\fragmentoc#1#2{\bm{(}\,#1\,\bm{.\,.}\,#2\,\bm{]}}
\def\fragmentcc#1#2{\bm{[}\,#1\,\bm{.\,.}\,#2\,\bm{]}}
\def\position#1{\bm{[}\,#1\,\bm{]}}
\newtheorem{observation}[theorem]{Observation}
\title{Logarithmic-Time Internal Pattern Matching Queries in~Compressed and Dynamic Texts}
\author{Anouk Duyster}{Max Planck Institute for Informatics, SIC, Saarbrücken, Germany}{aduyster@mpi-sws.org}{https://orcid.org/0009-0009-6027-9377}{}
\author{Tomasz Kociumaka}{Max Planck Institute for Informatics, SIC, Saarbrücken, Germany}{tomasz.kociumaka@mpi-inf.mpg.de}{https://orcid.org/0000-0002-2477-1702}{Part of this work was done while at INSAIT, Sofia University "St. Kliment Ohridski", Bulgaria, funded from the Ministry of Education and Science of Bulgaria (support for INSAIT, part of the Bulgarian National Roadmap for Research Infrastructure).}
\authorrunning{A. Duyster and T. Kociumaka}
\begin{document}
\maketitle
\setcounter{page}{1}
\begin{abstract}
    Internal Pattern Matching (IPM) queries on a text $T$, given two fragments $X$ and $Y$ of $T$ such that $|Y|<2|X|$, ask to compute all exact occurrences of $X$ within $Y$.
    IPM queries have been introduced by Kociumaka, Radoszewski, Rytter, and Waleń [SODA'15 \& SICOMP'24], who showed that they can be answered in $\Oh(1)$ time using a data structure of size $\Oh(n)$ and used this result to answer various queries about fragments of~$T$.

    In this work, we study IPM queries on compressed and dynamic strings.
    Our result is an $\Oh(\log n)$-time query algorithm applicable to any balanced recompression-based run-length straight-line program (RLSLP).
    In particular, one can use it on top of the RLSLP of Kociumaka, Navarro, and Prezza [IEEE TIT'23], whose size $\Oh\big(\delta \log \frac{n\log \sigma}{\delta \log n}\big)$ is optimal (among all text representations) as a function of the text length $n$, the alphabet size $\sigma$, and the substring complexity $\delta$.
    Our procedure does not rely on any preprocessing of the underlying RLSLP, which makes it readily applicable on top of the dynamic strings data structure of Gawrychowski, Karczmarz, Kociumaka, Łącki and Sankowski [SODA'18], which supports fully persistent updates in logarithmic time with high probability.
\end{abstract}

\section{Introduction}
Given two fragments $X=T\fragmentco{x}{x'}$ and $Y=T\fragmentco{y}{y'}$ of a text $T$ such that $|Y|<2|X|$,\footnote{The restriction $|Y|<2|X|$ ensures that the starting positions of the reported fragments form an arithmetic progression and thus can be represented in constant space.} an Internal Pattern Matching (IPM) query reports all exact occurrences of $X$ within $Y$, i.e., all fragments matching $X$ and contained within~$Y$.
Kociumaka, Radoszewski, Rytter, and Waleń~\cite{KRRW15} introduced IPM queries as a central building block for answering various further queries about fragments of~$T$.
They showed that, for every text of length $n$, IPM queries can be answered in $\Oh(1)$ time using a data structure of size $\Oh(n)$ that can be constructed in $\Oh(n)$ expected time.
The journal version of their work~\cite{KRRW23} provides an improved data structure that takes $\Oh(n  / \log_\sigma n)$ space and can be deterministically constructed in $\Oh(n/ \log_\sigma n)$ time if the characters of $T$ are integers in $\fragmentco{0}{\sigma}$ for some $\sigma = n^{\Oh(1)}$.
Hence, in the standard setting, IPM queries admit a compact data structure with optimal query and construction time.

Due to their multiple applications (see~\cite[Sections 1.2--1.3]{KRRW23}), IPM queries have also been considered in further settings, including the compressed setting, where the goal is to exploit compressibility of the text, and the dynamic setting, where the text may change over time.
Along with the Longest Common Extension (LCE) queries~\cite{LV88}, IPM queries constitute an elementary operation of the PILLAR model, introduced by Charalampopoulos, Kociumaka, and Wellnitz~\cite{CKW20} with the aim of unifying approximate pattern-matching
algorithms across different settings.
Whereas LCE queries can be answered $\Oh(\log n)$ time in the compressed~\cite{I17,KK23} and dynamic settings~\cite{ABR00,GKKLS18}, the algorithms for IPM queries have been slower so far.
In this work, we propose a novel procedure that answers IPM queries in logarithmic time both in the compressed and dynamic settings,
and thus we eliminate the bottleneck in the state-of-the-art PILLAR model implementations and multiple approximate pattern matching algorithms~\cite{CKPRRWZ22,CKW20,CKW22,CPRRWZ24,CGKMU22}.

\paragraph*{Our result}
State-of-the-art implementations of LCE and IPM queries represent compressed and dynamic texts using a run-length straight-line program (RLSLP) constructed using a locally consistent parsing scheme.
Such a scheme repeatedly partitions the text into blocks and replaces blocks with individual symbols (so that identical blocks are replaced by the same symbol) until the text consists of a single symbol.
The schemes alternate between run-length encoding (with blocks consisting of multiple copies of the same symbol) in odd rounds and another partitioning method (that reduces the text length by a constant factor while ensuring that matching fragments are partitioned consistently, except for blocks at the endpoints) in even rounds.
At each even round, the \emph{recompression} technique of Jeż~\cite{J15,J16} classifies the symbols into \emph{left} and \emph{right} symbols and creates a length-two block out of every left symbol followed by a right symbol; the remaining blocks are of length one.
A more recent \emph{restricted} variant~\cite{KNP23,KRRW23} further classifies some symbols as \emph{inactive} so that they always form length-one blocks.

In this work, we show how to efficiently answer IPM queries using any RLSLP obtained via a (possibly restricted) recompression scheme.
Our algorithm does not need any processing of the recompression RLSLP; it only assumes that every non-terminal stores its production, the length of its expansion, and the index of the round when it has been created.

\begin{theorem}\label{thm:main}
IPM queries on a text $T$  represented using a (restricted) $r$-round recompression run-length straight-line program can be answered in $\Oh(r)$ time.\lipicsEnd
\end{theorem}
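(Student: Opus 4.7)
The plan is to process the recompression levels one at a time, using local consistency to lift the IPM query from level $\ell$ to level $\ell+1$ in $\Oh(1)$ time per level, with two passes (bottom-up then top-down) through all $r$ levels.

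First, for every level $\ell \in \{0, 1, \ldots, r\}$, I would decompose $X$ canonically as $X = X^L_\ell \cdot \hat{X}_\ell \cdot X^R_\ell$, where $\hat{X}_\ell$ is aligned with the level-$\ell$ parse boundaries (so it corresponds to a fragment of the level-$\ell$ string $T_\ell$), while $X^L_\ell, X^R_\ell$ are short residues corresponding to parts of level-$\ell$ symbols straddling the endpoints of~$X$. The analogous decomposition is computed for~$Y$. Walking up the RLSLP from the four endpoints $x, x', y, y'$ --- using the parent/level/expansion-length information that the theorem assumes is stored --- this precomputation takes $\Oh(r)$ time in total.

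Next, I would maintain, level by level, a constant-size representation of the set of candidate occurrences of $X$ inside $Y$ at that level. Since $|Y| < 2|X|$, the periodicity lemma guarantees that this set is at most one arithmetic progression, so it fits in $\Oh(1)$ space. The induction step uses local consistency of (restricted) recompression: every true occurrence of $X$ in $Y$ determines how the level-$\ell$ residues of $X$ interlock with the level-$\ell$ parse of $Y$, and the number of feasible alignment types is $\Oh(1)$; for each type, the lifted subproblem at level $\ell+1$ is to locate $\hat{X}_{\ell+1}$ inside $\hat{Y}_{\ell+1}$ at a residue-compatible position. Because the recompression alphabet shrinks geometrically, $\hat{X}_{\ell^*}$ has constant size after some $\ell^* = \Oh(r)$ levels, at which point the remaining occurrences are enumerated directly. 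Finally, the resulting arithmetic progression is propagated back down to level $0$ by undoing the alignment choices and verifying residue matches, again $\Oh(1)$ per level.

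The main obstacle will be the bookkeeping of residue alignment types across levels: different candidate occurrences may fall into different alignment types, so the arithmetic progression has to be carefully partitioned and later recombined in a way consistent with the lifted level-$(\ell+1)$ problem. The key technical ingredients I expect to use are that (i) in restricted recompression, inactive symbols always form singleton blocks, bounding the number of alignment types near a boundary by a small constant; (ii) the periodicity of $\hat{X}_\ell$ modulo residues is preserved by lifting, which prevents a single arithmetic progression from fragmenting into several; and (iii) verifying residue matching at level $\ell$ involves only $\Oh(1)$ comparisons of level-$\ell$ symbols, each a constant-time pointer lookup in the RLSLP that does not require any auxiliary preprocessing.
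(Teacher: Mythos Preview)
Your plan has two genuine gaps, each at a place the paper singles out as the real difficulty.

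\textbf{The level-$\ell^*$ parse of $Y$ need not be short.} You assume that once $\hat X_{\ell^*}$ has constant size, occurrences can be ``enumerated directly'' inside $\hat Y_{\ell^*}$. But $|Y|<2|X|$ does \emph{not} imply that the level-$\ell^*$ parse of $Y$ is comparable in length to that of $X$; the paper states this explicitly just before \cref{lem:defyp} (``the fact that $X$ is covered by a length-$\Oh(\ell)$ fragment of $T_{\ell+1}$ does not imply that the same is true for~$Y$''). The paper's remedy is to abandon the parse of $Y$ and instead define a clipped proxy $Y'_\ell\subseteq T_\ell$ anchored at the \emph{middle} position of $Y$ (which every occurrence of $X$ must cover), proving both that no true occurrence is lost (\cref{lem:defyp}) and that $|Y'_\ell|<4|\hX_\ell|$. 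Your proposal has no analogue of this construction, so the ``enumerate directly'' phase has no time bound.

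\textbf{Per-level verification of a whole arithmetic progression is the hard part, not bookkeeping.} In the top-down pass you must decide, for all members of the current progression simultaneously, whether the run $L_k$ (resp.\ $R_k$) matches to the left (resp.\ right) in $T_k$; that is an LCE-type question at level $k$, and you give no $\Oh(1)$-time mechanism for it. Your assertion (ii) that ``periodicity is preserved by lifting, which prevents a single arithmetic progression from fragmenting'' is carrying the entire argument and is neither made precise nor justified. The paper avoids per-level verification altogether: it selects a \emph{single} target level $\ell=\max\{k:|\hX_k|>k\}$ (\cref{lem:hx}), does explicit RLE pattern matching of $\hX_\ell$ in $Y'_\ell$ to obtain $\Oh(1)$ progressions (\cref{cor:hxinyp}), and then verifies each progression with $\Oh(1)$ LCE queries at level~$0$ (\cref{lem:confirm}), each costing $\Oh(r)$ by \cref{prp:lce}. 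What you label the ``main obstacle'' is in fact the whole theorem; the paper's contribution is precisely a route that bypasses the level-by-level lift/verify loop you propose.
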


\paragraph*{IPM queries in compressed texts}
Kociumaka, Navarro, and Prezza~\cite{KNP23} proved that every non-empty text $T$ admits an $\Oh(\log n)$-round restricted recompression RLSLP of size $\Oh\big(\delta \log\frac{n\log \sigma}{\delta \log n}\big)$, where $n$ is the length of the text, $\sigma$ is the alphabet size, and $\delta$ is the substring complexity of $T$.
More recently, Kempa and Kociumaka~\cite{KK23} showed how to construct such an RLSLP in $\Oh(\delta\log^7 n)$ time from the Lempel--Ziv (LZ77) parsing of $T$.
Combining this with Theorem~\ref{thm:main}, we obtain the following:

\begin{corollary}\label{cor:compressed}
For every non-empty text $T$, there is a data structure of size $\Oh\big(\delta \log\frac{n\log \sigma}{\delta \log n}\big)$ that answers IPM queries in $\Oh(\log n)$ time.
Such a data structure can be constructed in $\Oh(\delta \log^7 n)$ time given the LZ77 parsing of~$T$.\lipicsEnd
\end{corollary}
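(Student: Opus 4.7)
The plan is to obtain the corollary by composing Theorem~\ref{thm:main} with two prior results mentioned just above it. First, I would invoke the construction of Kociumaka, Navarro, and Prezza~\cite{KNP23} to get, for any non-empty text $T$, a restricted recompression RLSLP with $r=\Oh(\log n)$ rounds and size $\Oh\big(\delta \log\frac{n\log \sigma}{\delta \log n}\big)$. The size bound of the data structure matches the claim immediately, so the storage part of the corollary reduces to storing this RLSLP together with the constant amount of per-nonterminal information required by Theorem~\ref{thm:main} (the production, the expansion length, and the round index).

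For the query time, I would apply Theorem~\ref{thm:main} directly to this RLSLP. Since Theorem~\ref{thm:main} yields $\Oh(r)$ query time and $r=\Oh(\log n)$, this gives the claimed $\Oh(\log n)$ bound. The only point worth checking is that the RLSLP produced by~\cite{KNP23} satisfies the ``recompression-based'' hypothesis of Theorem~\ref{thm:main}, which is indeed the very scheme they use; the auxiliary fields (expansion length and round index) can be attached during construction without changing the asymptotic size or construction cost.

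For the construction-time statement, I would plug in the result of Kempa and Kociumaka~\cite{KK23}, who build the RLSLP of~\cite{KNP23} in $\Oh(\delta \log^7 n)$ time from the LZ77 parsing of $T$. Augmenting each nonterminal with its expansion length and the round in which it was created is done on the fly during their construction at no extra asymptotic cost, so the total construction time remains $\Oh(\delta \log^7 n)$.

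There is essentially no obstacle beyond verifying interface compatibility: the ``hard'' work is absorbed by Theorem~\ref{thm:main} and by the cited RLSLP constructions. The only mild bookkeeping point to pin down is that Theorem~\ref{thm:main} requires no further preprocessing of the RLSLP beyond the three constant-size fields per nonterminal; this is explicitly noted in the statement of Theorem~\ref{thm:main}, so the composition is immediate.
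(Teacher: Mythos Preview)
Your proposal is correct and matches the paper's own justification: the corollary is obtained by combining Theorem~\ref{thm:main} with the $\Oh(\log n)$-round, $\Oh\big(\delta \log\frac{n\log \sigma}{\delta \log n}\big)$-size restricted recompression RLSLP of~\cite{KNP23} and the $\Oh(\delta \log^7 n)$-time construction of~\cite{KK23}. The paper states this composition in the paragraph preceding the corollary and gives no further proof, so your write-up is essentially a more explicit version of the same argument.
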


Analogous results are already known for LCE queries \cite[Theorem 5.25]{KK23}, which means that the PILLAR model can be implemented in $\Oh(\log n)$ time per operation using $\Oh(\delta \log\frac{n\log \sigma}{\delta \log n})$ space and $\Oh(\delta \log^7 n)$ preprocessing time.\footnote{Using the results of \cite{I17} instead of \cite{KK23}, one can achieve a faster $\Oh(g \log \frac{n}{g})$-time construction from a size-$g$ straight-line grammar generating $T$ at the expense of a larger data structure of size $\Oh(z \log \frac{n}{z})$, where $z$ is the size of the LZ77 parsing of $T$.}
Prior to this work, state-of-the-art implementations of IPM queries in the compressed setting required $\Oh(\log^3 n)$ time~\cite{KK20} or $\Oh(\log^2 n \log \log n)$ time~\cite{CKW20} per query.

\paragraph*{IPM queries in dynamic texts}
The dynamic strings data structure of Gawry\-chowski, Karczmarz, Kociumaka, Łącki, and Sankowski \cite{GKKLS18} maintains a collection of non-empty strings, allowing one to add new strings to the collection explicitly ($\texttt{make\_string}$), by non-destructively concatenating two existing strings ($\texttt{concat}$), or by non-destructively splitting an existing string into two pieces ($\texttt{split}$).
These updates take $\Oh(\log N)$ time with high probability, where $N$ is the total length of the strings of the collection, except for $\texttt{make\_string}$, which takes $\Oh(n + \log N)$ time for creating a string of length $n$.
Internally, the strings are represented using an $r$-round (non-restricted) recompression RLSLP, where $r=\Oh(\log N)$ with high probability.
Hence, Theorem~\ref{thm:main} yields the following
\begin{corollary}\label{cor:dynamic}
The dynamic strings data structure of~\cite{GKKLS18} supports IPM queries in $\Oh(\log N)$ time with high probability, where $N$ is the total length of the stored strings.\lipicsEnd
\end{corollary}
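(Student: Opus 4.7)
The plan is to invoke Theorem~\ref{thm:main} directly on the recompression RLSLP maintained internally by the GKKLS18 dynamic strings data structure. I would start by recalling that this structure represents each string $T$ in the collection by a handle pointing to a root nonterminal of a globally maintained (non-restricted) recompression RLSLP, and that the GKKLS18 analysis bounds the number of rounds $r$ by $\Oh(\log N)$ with high probability, where $N$ is the total length of the stored strings. Since Theorem~\ref{thm:main} is stated uniformly for the restricted and non-restricted variants, substituting this bound into the $\Oh(r)$ query time immediately yields the target $\Oh(\log N)$ w.h.p.\ bound.

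Next, I would verify that the GKKLS18 structure exposes exactly the per-nonterminal interface assumed by Theorem~\ref{thm:main}: every nonterminal must store its production, its expansion length, and the round in which it was created. The production is stored by definition of an RLSLP, the expansion length is already maintained (it is needed for constant-time positional navigation during $\texttt{split}$ and $\texttt{concat}$ updates), and the round index is an immediate by-product of the recompression procedure; if it is not written down explicitly in the published description, one can attach it at creation time at no asymptotic cost, since every nonterminal is created exactly once.

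With the interface matched, an IPM query on the dynamic collection reduces to a query on the underlying RLSLP: given a handle to $T$ and integer endpoints $x,x',y,y'$ specifying the fragments $X$ and $Y$, the algorithm of Theorem~\ref{thm:main} returns the requested arithmetic progression of occurrences in $\Oh(r)=\Oh(\log N)$ time with high probability. A decisive feature is that Theorem~\ref{thm:main} requires \emph{no} RLSLP preprocessing; this is essential because after every $\texttt{concat}$ or $\texttt{split}$ update the recompression parse is partially rebuilt, and any preprocessing requirement would force repeating work on many nonterminals and so destroy the $\Oh(\log N)$ update bound. The main obstacle is consequently not algorithmic but bookkeeping: one must confirm that GKKLS18 really does maintain (or can be cheaply augmented to maintain) the expansion length and the round index on every nonterminal, and that the high-probability bound $r = \Oh(\log N)$ can be made to hold uniformly over the query point (a routine union bound over the polynomially many reachable states of the structure).
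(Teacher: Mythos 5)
Your proposal is correct and follows essentially the same route as the paper: the corollary is obtained simply by observing that the GKKLS18 structure internally maintains an $r$-round (non-restricted) recompression RLSLP with $r=\Oh(\log N)$ with high probability, that non-restricted recompression is a special case covered by Theorem~\ref{thm:main}, and that the theorem's lack of any preprocessing requirement (only per-nonterminal storage of the production, expansion length, and creation round) makes it directly applicable. Your additional bookkeeping remarks (augmenting nonterminals with the round index at creation time, and the high-probability bound on $r$) are exactly the points the paper relies on implicitly.
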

The dynamic strings data structure supports LCE queries in $\Oh(\log N)$ time w.h.p., so this yields a PILLAR model implementation in $\Oh(\log N)$ time per operation.
Previous implementations supported IPM queries in $\Oh(\log^2 N)$ time~\cite{CKW20}.
Faster IPM queries in the dynamic setting were only known for an alternative deterministic dynamic strings implementation~\cite{KK22} that takes $\Oh(\log N \log^{2-o(1)} \log N)$ time per update as well as for both LCE and IPM queries; see~\cite{CKW22} for a discussion.

\section{Preliminaries}
In this section, we formally define restricted recompression and the underlying concepts and notations.
Compared to previous works~\cite{GKKLS18,KK23,KRRW23} and the preliminary version of this paper~\cite{DK24}, we provide much more details on how to efficiently traverse (uncompressed) parse trees.

\paragraph*{Basic definitions}
A string is a finite sequence of characters from a given alphabet $\Sigma$.
The length of a string $S$ is denoted $|S|$, and the empty string is denoted $\emptystring$.
For a \emph{position} $i\in \fragmentco{0}{|S|}$,\footnote{For $a,b\in \mathbb{R}$, denote $\fragmentcc{a}{b}=\{k\in \Z : a \le k \le b\}$, $\fragmentco{a}{b} = \{k\in \Z : a \le k < b\}$, and $\fragmentoc{a}{b} = \{k \in \Z : a < k \le b\}$.} the $i$th character of $S$ is denoted $S\position{i}$.
A string $U$ is a \emph{substring} of a string $S$ if $U=S\position{i}S\position{i+1}\cdots S\position{j-1}$ holds for some integers $0\le i \le j \le |S|$.
In this case, we say that $U$ \emph{occurs} in $S$ at position $i$, and we denote the \emph{occurrence} of $U$ at position $i$ of $S$ by $S\fragmentco{i}{j}$.
We call $S\fragmentco{i}{j}$ a \emph{fragment} of $S$ and, in some contexts, also denote it by $S\fragmentcc{i}{j-1}$ or $S\fragmentoc{i-1}{j-1}$. Fragments  the form $S\fragmentco{0}{j}$ and $S\fragmentco{i}{|S|}$ are \emph{prefixes} and \emph{suffixes} of $S$, respectively.

Technically, a fragment $S\fragmentco{i}{j}$ is interpreted as a tuple $(S, i, j)$ consisting of the positions $i$ and $j$ as well as (a constant reference to) the string $S$.
We say that fragments $S\fragmentco{i}{j}$ and $T\fragmentco{i'}{j'}$ of strings $S$ and $T$, respectively, \emph{match} if they are occurrences of the same substring.
Consistently with the literature, we then write $S\fragmentco{i}{j}=T\fragmentco{i'}{j'}$ even though the triples $(S,i,j)$ and $(T,i',j')$ are not necessarily equal.

We denote by $UV$ or $U\cdot V$ the concatenation of two strings $U$ and $V$, that is, $UV = U\position{0} \cdots U\position{|U|-1}\allowbreak V\position{0} \cdots V\position{|V|-1}$.
Furthermore, both $S_0\cdots S_{k-1}$ and $\bigodot_{i=0}^{k-1} S_i$ denote the concatenation of $k\in \Zz$ strings $S_0,\ldots,S_{k-1}$.
Moreover, $S^k=\bigodot_{i=0}^{k-1} S$ denotes the concatenation of $k$ copies of the same string $S$.

We use $\overline{S}$ to denote the reverse of $S$, that is, $\overline{S}=S\position{|S|-1} \cdots S\position{0}$.
An integer $p \in \fragmentoc{0}{|S|}$ is a period of a string $S$ if $S\position{i} = S\position{i+p}$ holds for every $i \in \fragmentco{0}{|S|-p}$; equivalently, $p$ is a period of $S$ if and only if the length-$(|S|-p)$ prefix and suffix of $S$ match, that is, $S\fragmentco{0}{|S|-p}=S\fragmentco{p}{|S|}$.

\paragraph*{Straight-line grammars}
For a fixed context-free grammar $\G$, we denote by $\Sigma$ and $\N$ the sets of terminals and non-terminals, respectively.
The set of \emph{symbols} is $\S:=\Sigma\cup \N$.
We say that $\G$ is a~\emph{straight-line grammar} (SLG) if:
\begin{itemize}
\item each non-terminal $A\in \N$ has a unique production $A\to \rhs(A)$, whose right-hand side is a non-empty sequence of symbols, i.e., $\rhs(A)\in \S^+$, and
\item the set of symbols $\S$ admits a partial order $\prec$ such that $B \prec A$ if $B$ occurs in $\rhs(A)$.
\end{itemize}
A straight-line grammar $\G$ is a \emph{run-length straight-line program} (RLSLP) if each production \(A \to \rhs(A)\) is of one of the following two types:
\begin{description}
\item[Pair:\hspace{.35cm}] $\rhs(A)= BC$ for some symbols $B,C\in \S$ such that $B \ne C$;
\item[Power:] $\rhs(A) = B^m$ for a symbol $B\in \S$ and an integer $m \ge 2$.
\end{description}

Every straight-line grammar $\G$ yields an \emph{expansion} function $\exp: \S^* \to \Sigma^*$ assigning to every string $A\in \S^*$ the unique terminal-symbol string $\exp(A)\in \Sigma^*$ derivable from $A$.
The function $\exp$ also admits a concise recursive definition:
\[\exp(A) = \begin{cases}
  A & \text{if $A\in \Sigma$},\\
  \exp(\rhs(A)) & \text{if $A\in \N$},\\
  \bigodot_{i=0}^{a-1}\exp(A\position{i}) & \text{if $A\in \S^{a}$ for $a\ne 1$.}
\end{cases}\]
The expansion of the starting symbol of $\G$ is the string \emph{represented} by~$\G$.

\paragraph*{Restricted recompression}
Both recompression and restricted recompression, given a string $\T\in \Sigma^*$, construct a sequence of strings $(\T_k)_{k=0}^\infty$ over an infinite alphabet $\Symb$ defined as the least fixed point of the following equation:
 \[\Symb = \Sigma \cup (\Symb \times \Symb)\cup (\Symb \times \mathbb{Z}_{\ge 2} ).\]
In other words, $\Symb = \bigcup_{k=0}^\infty \Symb_k$, where $\Symb_0=\Sigma$ and $\Symb_{k}=\Symb_{k-1}\cup (\Symb_{k-1}\times \Symb_{k-1})\cup (\Symb_{k-1} \times \mathbb{Z}_{\ge 2})$ for $k\in \Zp$.
Symbols in $\Symb \sm \Sigma$ are non-terminals
with $\rhs((B,C))=BC$ for $(B,C)\in \Symb \times \Symb$ and $\rhs((B,m))= B^m$
for $(B,m)\in \Symb\times \mathbb{Z}_{\ge 2}$.
Intuitively, $\Symb$ can be interpreted as a universal RLSLP: for every RLSLP with symbols $\S$ and terminals $\Sigma \subseteq \S$, there is a unique homomorphism $f : \S^* \to \Symb^*$ such that $f(A)$ = $A$ if $A \in \Sigma$,
$\rhs(f(A)) = f(\rhs(A))$ if $A \in \S \setminus \Sigma$, and $f(A)=\bigodot_{i=0}^{|A|-1} f(A\position{i})$ for every $A \in \S^*$.
As a result, $\Symb$ provides a convenient formalism for reasoning about procedures generating RLSLPs.

Restricted recompression uses the following transformations of $\Symb^*$ to $\Symb^*$.

\begin{definition}[Restricted run-length encoding~\cite{KRRW23,KNP23}]\label{def:rle}
Given $\T\in \Symb^*$ and $\Act \sub \Symb$, we define $\rle_{\Act}(\T)\in \Symb^*$
to be the string obtained as follows by decomposing $\T$ into blocks and collapsing these blocks:
\begin{enumerate}
  \item For $i\in \fragmentco{0}{|T|-1}$, place a \emph{block boundary} between $\T\position{i}$ and $\T\position{i+1}$
  unless $\T\position{i}=\T\position{i+1}\in \Act$.
  \item Replace each block $\T\fragmentco{i}{i+m}= A^m$ of length $m\ge 2$ with a symbol $(A,m)$.\qedhere
\end{enumerate}
\end{definition}
\begin{remark}
By construction, $\exp(\rle_\Act(\T))=\exp(T)$ holds for every $\T\in \Symb^*$ and $\Act \sub \Symb$.
\end{remark}

\begin{definition}[Restricted pair compression~\cite{KRRW23,KNP23}]\label{def:pc}
  Given $\T\in \Symb^*$ and disjoint sets $\Left,\Right \sub \Symb$, we define $\pc_{\Left,\Right}(\T)\in \Symb^*$
  to be the string obtained as follows by decomposing $\T$ into blocks and collapsing these blocks:
  \begin{enumerate}
    \item For $i\in \fragmentco{0}{|T|-1}$, place a \emph{block boundary} between $\T\position{i}$ and $\T\position{i+1}$
    unless $\T\position{i}\in \Left$ and $\T\position{i+1}\in \Right$.
    \item Replace each block $\T\fragmentcc{i}{i+1}$ with a symbol $(\T\position{i},\T\position{i+1})$.\qedhere
  \end{enumerate}
\end{definition}
\begin{remark}
By construction, $\exp(\pc_{\Left,\Right}(\T))=\exp(\T)$ holds for every $\T\in \Symb^*$ disjoint $\Left,\Right \sub \Symb$.
\end{remark}

Restricted recompression repeatedly transforms the input text alternating between restricted run-length encoding and restricted pair compression.
Unlike in the classic (non-restricted) recompression~\cite{J15,J16}, it is allowed to set $\Act\subsetneq \Symb$ and $\Left\cup\Right\subsetneq \Symb$.

\begin{definition}[Restricted recompression~\cite{KRRW23,KNP23}]\label{def:recompression}
For a string $\T\in \Sigma^*$, we define $T_0=T$ and $T_k=\shrink_{k}(T_{k-1})$ for every $k\in \Zp$,
where
\[\shrink_k = \begin{cases} \rle_{\Act_k} & \text{for some fixed set $\Act_k\sub \Symb$ if $k$ is odd,}\\
  \pc_{\Left_k,\Right_k} & \text{for some fixed disjoint sets $\Left_k,\Right_k\sub\Symb$ if $k$ is even.}\end{cases}\]
If $r=\min\{k\in \Zz : |\T_k|=1\}$ exists,\footnote{Constructions in~\cite{J15,KRRW23,KNP23} provide several strategies of picking $\Act_k$, $\Left_k$, and $\Right_k$ based on a given text $T\in \Sigma^n$ so that $|T_r|=1$ holds for some $r=\Oh(\log n)$.
There also randomized constructions~\cite{GKKLS18,KRRW23}, where $\Act_k$, $\Left_k$, and $\Right_k$ are random variables, and, for every $\T \in \Sigma^n$, with high probability, $|T_r|=1$ holds for some $r=\Oh(\log n)$.} we define the underlying \emph{$r$-round restricted recompression RLSLP} $\G$ to be an RLSLP generating $T$ with symbols $\Pres = \bigcup_{k=0}^r \Pres_k$, where $\Pres_k = \{\T_k\position{j} : j\in \fragmentco{0}{|T_k|}\}$, and starting symbol~$T_r\position{0}$.
\end{definition}

Our query algorithm assumes that the text $T$ is represented as an $r$-round restricted recompression RLSLP with every $A\in \N$ storing its right-hand side $\rhs(A)$, expansion length $\len{A}\coloneq |\exp(A)|$, and \emph{level} \[\lvl(A)\coloneq\min\{k\in \Zz : A\in \S_k\}.\]

\paragraph*{Parse tree}
The \emph{parse tree} $\Tr(A)$ of a symbol $A\in \S$ in a straight-line grammar is a rooted ordered tree with each node $\nu$ associated to a symbol $\symb{\nu}\in \S$.
The root of $\Tr(A)$ is a node $\rho$ with $\symb{\rho}=A$.
If $A \in \Sigma$, then $\rho$ has no children.
If $A\in \N$ and $\rhs(A)=A_0\cdots A_{a-1}$, then $\rho$ has $a$ children,
and the subtree rooted at the $i$th child is (a copy of) $\Tr(A_i)$.
The parse tree $\Tr$ of a straight-line grammar $\G$ is the parse tree of the starting symbol of $\G$ (whose expansion is the text $T$ represented by $\G$).

Each node $\nu$ of $\Tr$ is also associated with a position $\pos{\nu}\in \fragmentco{0}{|T|}$, which we interpret as the index of the leftmost leaf in the subtree of $\Tr$ rooted at $\nu$.
Moreover, we denote by $\nexp{\nu}\coloneq T\fragmentco{\pos{\nu}}{\pos{\nu}+\len{\symb{\nu}}}$ the fragment of $\T$ corresponding to $\nu$.
For the root $\rho$, we define  $\pos{\rho}=0$ so that $\nexp{\rho}=T\fragmentco{0}{|T|}$ is the whole $T$.
Moreover, if $\rhs(\symb{\nu})=A_0\cdots A_{a-1}$ and $\nu_{0},\ldots,\nu_{a-1}$ are the children of $\nu$, then $\pos{\nu_i} = \pos{\nu} + \sum_{j=0}^{i-1} \len{A_j}$ for $i\in \fragmentco{0}{a}$.
This way, $\nexp{\nu}$ is the concatenation of fragments $\nexp{\nu_0},\ldots,\nexp{\nu_{a-1}}$.
A straightforward inductive argument shows that $\nexp{\nu}$ is an occurrence of $\exp(\symb{\nu})$ in $\Tr$.

Although we typically cannot afford to store the parse tree $\Tr$ explicitly, our representation of the restricted recompression RLSLP allows for efficiently traversing the parse tree~$\Tr$ using an abstraction of \emph{pointers}.
We implement a pointer to a node $\nu\in \Tr$ as a tuple $(\pos{\nu},\symb{\nu},\parent{\nu})$,
where $\parent{\nu}$ is (a pointer to) the parent $\nu$ (or $\nil$ if $\nu$ is the root of~$\Tr$).
The symbol $\symb{\nu}$ and the parent $\parent{\nu}$ are stored as constant references (allowing many references to the same immutable object in the computer memory).
This way, one can think of a pointer as a stack containing $\pos{\mu}$ and $\symb{\mu}$ for every ancestor $\mu$ of $\nu$ starting from the root $\rho$ at the bottom of the stack to the node $\nu$ itself at the top of the stack.
The stack implementation is like in functional programming languages so that multiple pointers can share a common prefix of the stack.

Given (a pointer to) a node $\nu$, we can in constant time retrieve (a pointer to) the parent $\parent{\nu}$.
Given additionally an index $i$, we can also construct (a pointer to) the $i$th child of $\nu$, denoted $\child{\nu}{i}$; we assume that $\child{\nu}{i}=\nil$ for out-of-bounds indices $i$.
Additionally, given any position $j\in \fragmentco{\pos{\nu}}{\pos{\nu}+\len{\symb{\nu}}}$, we can compute an index $i\coloneq \idx{\nu}{j}$ of the unique child $\nu_i$ of $\nu$ such that $j\in \fragmentco{\pos{\nu_i}}{\pos{\nu_i}+\len{\symb{\nu_i}}}$.
Using this functionality, we can descend to the desired leaf of $\Tr$:

\begin{observation}\label{obs:leaf}
  Given any position $j\in \fragmentco{0}{|T|}$, a pointer to the $j$-th leaf of $\Tr$, that is, a node $\leaf{j}\in \Tr$ such that $\exp(\leaf{j})=T\position{j}$, can be retrieved in $\Oh(r)$ time.
\end{observation}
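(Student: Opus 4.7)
The plan is to descend from the root of $\Tr$ to the desired leaf along a single root-to-leaf path, extending the pointer stack one node at a time. Concretely, I would maintain a pointer $\nu$ initialized to the root $\rho$ (i.e., the tuple $(0,\T_r\position{0},\nil)$) and repeatedly replace $\nu$ by $\child{\nu}{\idx{\nu}{j}}$ until $\symb{\nu}\in \Sigma$; at that point, $\nu=\leaf{j}$ and we return it.

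The key claim to establish is that the depth of $\Tr$ is $\Oh(r)$. This follows from the structure of restricted recompression: every non-terminal $A\in \Pres_k\setminus \Pres_{k-1}$ is produced at round $k$ from symbols of $\T_{k-1}$, so $\rhs(A)$ consists exclusively of symbols of level strictly less than $\lvl(A)$. Hence, along any root-to-leaf path in $\Tr$, the level strictly decreases at every step, and since $\lvl(\T_r\position{0})\le r$, the path has length at most $r$.

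It remains to argue that each descent step runs in constant time. Given a pointer to $\nu$, the primitive $\idx{\nu}{j}$ can be evaluated directly from $\rhs(\symb{\nu})$ and $\pos{\nu}$: for a pair production $\rhs(\symb{\nu})=BC$, the offset $j-\pos{\nu}$ is compared with $\len{B}$ to decide between index $0$ and $1$; for a power production $\rhs(\symb{\nu})=B^m$, the index is $\lfloor (j-\pos{\nu})/\len{B}\rfloor$. Both cases take $\Oh(1)$ time given that each non-terminal stores its production and expansion length. Constructing $\child{\nu}{i}$ from $\nu$ and $i$ amounts to pushing the triple $(\pos{\nu}+ \text{offset},\,\rhs(\symb{\nu})\position{i},\,\nu)$ onto the functional pointer stack, which is also $\Oh(1)$. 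Multiplying the $\Oh(1)$ per-step cost by the $\Oh(r)$ depth bound yields the claimed $\Oh(r)$ total running time.

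The only delicate point is the depth bound; once the strict level decrease is observed, the rest is routine constant-time arithmetic on the stored fields of the symbols encountered along the path.
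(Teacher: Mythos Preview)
Your proposal is correct and follows essentially the same approach as the paper: descend from the root by repeatedly applying $\child{\nu}{\idx{\nu}{j}}$ until a leaf is reached, with the $\Oh(r)$ bound coming from the parse-tree depth. You actually give more justification than the paper does---explicitly arguing the strict level decrease along root-to-leaf paths and spelling out the $\Oh(1)$ cost of $\idx{\cdot}{\cdot}$ for pair and power productions---whereas the paper's proof simply asserts that the path length is $\Oh(r)$.
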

\begin{proof}
  We define a more general recursive $\descend{\nu}{j}$ function that computes $\leaf{j}$ given (a pointer to) a node $\nu$ such that $j\in \fragmentco{\pos{\nu}}{\pos{\nu}+\len{\symb{\nu}}}$.
  If $\nu$ is already a leaf (i.e., $\len{\symb{\nu}}=1$), then $\descend{\nu}{j}=\nu$.
  Otherwise, $\descend{\nu}{j} = \descend{\child{\nu}{\idx{\nu}{j}}}{j}$.
  Finally, we observe that $\leaf{j}$ can be computed as $\descend{\rho}{j}$, where $\rho$ is the root of $\Tr$.

  The time complexity is proportional to the length of the path from $\rho$ to $\leaf{j}$, which is $\Oh(r)$.
\end{proof}

If $\mu = \parent{\nu}$ and $\mu_0,\ldots,\mu_{a-1}$ are the children of $\mu$, then we can retrieve the index $i$ such that $\nu = \mu_i$ as $i\coloneqq \idx{\parent{\nu}}{\pos{\nu}}$.
For $d\in \fragmentco{-i}{a-i}$, we refer to $\mu_{i+d}$ as the $d$th sibling of $\nu$ and, for convenience, write $\sibling{\nu}{d}\coloneq \child{\parent{\nu}}{d+\idx{\parent{\nu}}{\pos{\nu}}}$.
We assume that $\sibling{\nu}{d}=\nil$ if $d\notin \fragmentco{-i}{a-i}$ or $\mu= \parent{\nu}=\nil$.

\paragraph*{Uncompressed parse tree}
\newcommand{\unu}{\overline\nu}
\newcommand{\umu}{\overline\mu}

We also observe that each symbol in each of the strings $(T_k)_{k=0}^\infty$ can be associated with a node in the parse tree $\Tr$.
This mapping is not injective, though: if a symbol of $T_k$ forms a length-one block with respect to $\shrink_{k+1}$, it remains in $T_{k+1}$ represented by the same node of the parse tree $\Tr$.
As in \cite{GKKLS18}, this can be alleviated using an \emph{uncompressed parse tree} $\uTr$,
where the edges of $\Tr$ are subdivided so that a node $\nu\in \Tr$ gets a separate copy for each level $k$ such that $T_k$ has a symbol associated with $\nu$.
As a result, there is a bijection between the characters $T_k\position{i}$, where $k\in \Zz$ and $i\in \fragmentco{0}{|T_k|}$, and the nodes $\unu$ of $\uTr$.\footnote{To support arbitrarily large $k\in \Zz$, the root of $\Tr$ corresponds to an infinite path in $\uTr$, with one node per level $k\ge r$.}

For each node $\unu\in \uTr$, we denote $\level{\unu}$ and $\symb{\unu}$ to be the level of $\unu$ and the symbol represented by $\unu$, respectively; in other words, $\level{\unu}=k$ and $\symb{\unu}=A$ if and only if $\unu$ represents $T_k\position{i}=A$ for some $i\in \fragmentco{0}{|T_k|}$.
Note that $\level{\unu} \ge \lvl(\symb{\unu})$, but the equality does not necessarily hold.
\begin{fact}\label{fct:expand_level}
  A node $\unu\in \uTr$ has exactly one child if and only if $\level{\unu} > \lvl(\symb{\unu})$.
\end{fact}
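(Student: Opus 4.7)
The plan is to unpack the subdivision that produces $\uTr$ from $\Tr$. Each node $\nu \in \Tr$ with symbol $A = \symb{\nu}$ corresponds, in $\uTr$, to a vertical chain of copies---one per level $k$ in a (possibly one-sided infinite) interval that starts at $k = \lvl(A)$ and ends just below the level at which $A$ is absorbed into its parent's block. Each non-bottom copy in this chain represents $A$ being carried over from $T_{k-1}$ to $T_k$ as a length-one block under $\shrink_k$, and so has exactly one $\uTr$-child, namely the copy of $\nu$ at level $k-1$. Only the bottom copy of $\nu$ in this chain inherits the ``real'' $\Tr$-children of $\nu$.

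With this picture in hand, I would split into two cases on $\unu \in \uTr$ with $k = \level{\unu}$ and $A = \symb{\unu}$. If $k = \lvl(A)$, then $\unu$ is the bottom of the chain for $\nu$, so its $\uTr$-children coincide with the $\Tr$-children of $\nu$: there are $0$ of them if $A \in \Sigma$, exactly $2$ if $\rhs(A) = BC$, and $m \ge 2$ if $\rhs(A) = B^m$. In each subcase, the number of children differs from $1$. Conversely, if $k > \lvl(A)$, then $\unu$ is a non-bottom node of the chain, and thus has exactly one $\uTr$-child by the observation above. Combining the two directions yields the stated equivalence.

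The main (and rather mild) obstacle is justifying cleanly that the set of levels at which $\nu$ is represented forms a contiguous interval, and that the subdivision of the edge from $\nu$ to its parent indeed produces a linear chain of single-child nodes. This follows by inspection of Definitions~\ref{def:rle} and~\ref{def:pc}: at each step, the symbol at a fixed position of $T_{k-1}$ either persists to the corresponding position of $T_k$ as a length-one block, or is consumed together with its neighbours into a strictly higher-level non-terminal at the parent position of $\Tr$, after which $\nu$ is no longer represented. Once this contiguity is made precise, both directions of the ``if and only if'' are immediate from the case analysis above.
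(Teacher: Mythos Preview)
Your structural picture of $\uTr$ as a subdivision of $\Tr$ is correct, and one direction of the equivalence follows from it: if $k=\lvl(A)$, then (since $\unu$ is represented at level $k$ and every representation level of $\nu$ is at least $\lvl(A)$) $\unu$ is the bottom of $\nu$'s chain, so its $\uTr$-children are the $\Tr$-children of $\nu$, of which there are $0$ or at least $2$.

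The gap is in the other direction. To conclude that $k>\lvl(A)$ forces $\unu$ to be a non-bottom copy, you need that the chain for $\nu$ bottoms out exactly at level $\lvl(A)$; equivalently, that whenever a length-$\ge 2$ block of $T_{k-1}$ collapses to $A$ under $\shrink_k$, we must have $\lvl(A)=k$. You assert this (``strictly higher-level non-terminal'') and say it follows by inspection of Definitions~\ref{def:rle} and~\ref{def:pc}, but it does not. Nothing in those definitions prevents the same pair $(B,C)$ from being formed at some even level $k_1$, surviving in $T_{k_1},T_{k_1+1},\ldots$ at one position, while at a later even level $k_2>k_1$ an adjacent $BC$ elsewhere in $T_{k_2-1}$ is collapsed again to the very same symbol $(B,C)$. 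Ruling this out is precisely the non-trivial step, and it is where the paper's proof invokes the external consistency result \cite[Lemma~4.4]{KRRW23}: two fragments of $T_\ell$ match (as strings over $\Symb$) if and only if their expansions match. With that lemma, a length-$\ge 2$ fragment of $T_{\lvl(A)}$ expanding to $\exp(A)$ would contradict the length-$1$ occurrence of $A$ in $T_{\lvl(A)}$; without it, your argument is incomplete.

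In short, you correctly identified contiguity of the level interval as easy, but you misidentified the hard part: it is not contiguity but the location of the interval's lower endpoint, and that requires the parsing-consistency lemma rather than a direct inspection of the block definitions.
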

\begin{proof}
  Suppose that $\unu$ represents $T_k\position{i}=A$ for some $i\in \fragmentco{0}{|T_k|}$, where $\level{\unu}=k$ and $\symb{\unu}=A$.
  Moreover, let $\ell = \lvl(A)$.
  If $k=\ell=0$, then, trivially, $\unu$ has no children.
  If $k=\ell>0$, then, by definition of~$\ell$, the symbol $A$ cannot occur in $T_{k-1}$, so $T_{k}\position{i}$ must have been obtained by collapsing a block of $T_{k-1}$ (with respect to $\shrink_k$) that consists of $|\rhs(A)|\ge 2$ symbols, and thus $\unu$ has at least two children.

  Finally, for a proof by contradiction, suppose that $k>\ell$ yet $\unu$ has at least two children (it cannot have a single child since $k>0$ and every block with respect to $\shrink_k$ is non-empty).
  This means that $T_k\position{i}$ must have been obtained by collapsing a block $T_{k-1}\fragmentco{i'}{j'}$ (with respect to $\shrink_k$) of length $j'-i'\ge 2$.
  Further expansions of this block yield a fragment $T_{\ell}\fragmentco{i''}{j''}$ of length $j''-i''\ge j'-i'\ge 2$ whose expansion matches $\exp(T_k\position{i})=\exp(A)$.
  At the same time, by definition of $\ell$, the symbol $A$ occurs as one of the characters (length-$1$ fragments) of $T_\ell$.
  This contradicts \cite[Lemma 4.4]{KRRW23}, which states that two fragments of $T_\ell$ match if and only if they have matching expansions.
\end{proof}

We use $\parent{\unu}$ and $\child{\unu}{i}$ to denote the parent and the $i$th child of $\unu$, respectively (with the latter defined as $\nil$ for out-of-bounds arguments $i$).
Observe that every node $\unu\in \uTr$ satisfies $\level{\parent{\unu}}=\level{\unu}+1$ and, if $\child{\unu}{i} \ne \nil$, then also $\level{\child{\unu}{i}}=\level{\unu}-1$.
\cref{fct:expand_level} lets us easily implement the parent and child operations based on the analogous operations on $\Tr$:
\begin{observation}
  Given a (pointer to) a node $\unu\in \uTr$, one can in constant time compute (pointers to) $\parent{\unu}$ as well as $\child{\unu}{i}$ for any index $i\in \Z$.
\end{observation}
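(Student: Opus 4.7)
The plan is to represent a pointer to a node $\unu\in \uTr$ as a pair $(\nu,k)$, where $\nu$ is a pointer to the corresponding node of $\Tr$ (as already described in this section) and $k=\level{\unu}$ is stored explicitly. Since the symbol $\symb{\nu}$ persists in $T_k$ for every $k$ in the half-open interval $\fragmentco{\lvl(\symb{\nu})}{\lvl(\symb{\parent{\nu}})}$, interpreted as $\fragmentco{\lvl(\symb{\nu})}{\infty}$ when $\parent{\nu}=\nil$, this is the valid range of $k$, and each node of $\uTr$ is in bijection with one such pair. This pair occupies constant space on top of the $\Tr$-pointer.

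For $\child{\unu}{i}$, I would split on whether $k>\lvl(\symb{\nu})$ or $k=\lvl(\symb{\nu})$. In the first case, \cref{fct:expand_level} guarantees that $\unu$ has a single child, so I would return $(\nu,k-1)$ for $i=0$ and $\nil$ otherwise. In the second case, the children of $\unu$ in $\uTr$ are in bijection with the children of $\nu$ in $\Tr$, each placed at level $k-1$, so I would return $(\child{\nu}{i},k-1)$, computed in constant time via the $\Tr$-operation $\child{\nu}{i}$ established earlier; the resulting pair is valid because $\lvl(\symb{\child{\nu}{i}})<\lvl(\symb{\nu})=k$ by the partial order $\prec$ on $\Pres$.

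For $\parent{\unu}$, the key question is whether the level-$(k+1)$ ancestor of $\unu$ still lies on the chain associated with $\nu$. If $\parent{\nu}=\nil$, so the chain above $\nu$ is unbounded, or if $k+1<\lvl(\symb{\parent{\nu}})$, then $\parent{\unu}=(\nu,k+1)$; otherwise $k+1=\lvl(\symb{\parent{\nu}})$ and $\parent{\unu}=(\parent{\nu},k+1)$, obtained via the constant-time $\Tr$ parent pointer. Each branch consults only $\lvl$, $\symb{\cdot}$, $\parent{\cdot}$, and $\child{\cdot}{\cdot}$, all of which are available in constant time, so both operations run in constant time. I do not anticipate any real obstacle once the level is stored alongside the $\Tr$-pointer: the case analysis is driven mechanically by \cref{fct:expand_level} together with the partial order on $\Pres$, and the only subtle point is that the root of $\Tr$ must be treated as an infinite upward chain in $\uTr$, which is handled uniformly by the $\parent{\nu}=\nil$ branch.
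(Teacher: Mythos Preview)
Your proposal is correct and essentially identical to the paper's proof: the same representation $(\nu,k)$, the same case split on $k>\lvl(\symb{\nu})$ versus $k=\lvl(\symb{\nu})$ for children, and the same test on $\lvl(\symb{\parent{\nu}})$ versus $k+1$ (with the $\parent{\nu}=\nil$ branch) for the parent. The only cosmetic difference is that the paper phrases the parent condition as ``$\parent{\nu}\ne\nil$ and $\lvl(\symb{\parent{\nu}})=k+1$'' rather than your equivalent negation.
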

\begin{proof}
  Suppose that $\unu$ is represented as $(\nu,k)$, where $\nu$ is the corresponding node of $\Tr$ and $k=\level{\unu}$.
  If $\parent{\nu}\ne \nil$ (i.e., $\nu$ is not the root of $\Tr$) and $\lvl(\symb{\parent{\nu}})= k+1$, then, by \cref{fct:expand_level}, we have $\parent{\unu}=(\parent{\nu},k+1)$; otherwise, $\parent{\unu}=(\nu,k+1)$.

  Furthermore, $\child{\unu}{0}=(\nu,k-1)$ if $k>\lvl(\symb{\nu})$, and $\child{\unu}{i}=(\child{\nu}{i},k-1)$ if $k=\lvl(\symb{\nu})$ and $\child{\nu}{i}\ne \nil$. In the remaining cases, $\child{\unu}{i}=\nil$.
\end{proof}

If $\unu$ represents $T_k\position{i}$, then we use $\nxt{\unu}$ and $\prv{\unu}$ to denote nodes representing $T_k\position{i+1}$ and $T_k\position{i-1}$, respectively, with $\nxt{\unu}=\nil$ if $i=|T_k|-1$ and $\prv{\unu}=\nil$ if $i=0$.
A naive implementation of the operation computing $\umu=\nxt{\unu}$ traverses the path from $\unu$ to $\umu$ in the uncompressed parse tree~$\uTr$, and symmetrically for $\unu=\prv{\umu}$.
As shown in \cite{GKKLS18}, a more sophisticated implementation of pointers (which requires additional auxiliary data structures and an assumption that the machine word consists of $\Omega(r)$ bits) allows for a constant-time implementation.
In this work, we use a convenient middle-ground implementation based on the parse tree $\Tr$:
\begin{lemma}\label{lem:traverse}
The $\nxt{\cdot}$ operation on uncompressed parse tree nodes can be implemented so that,
for every node $\unu_0\in \uTr$, every sequence of $s$ valid operations $(\unu_i \coloneq \unu_{i-1}.\mathsf{op}_i)_{i=1}^s$ takes $\Oh(r+s)$ time provided that, for each $i\in \fragmentcc{1}{s}$, the operation $\unu_i \coloneq\unu_{i-1}.\mathsf{op}_i$ is of the form $\unu_i \coloneq  \nxt{\unu_{i-1}}$, $\unu_i \coloneq  \parent{\unu_{i-1}}$, or $\unu_i \coloneq  \child{\unu_{i-1}}{j}$ for some $j\in \Z$.
The $\prv{\cdot}$ operation satisfies the same condition.\footnote{If $\prv{\cdot}$ and $\nxt{\cdot}$ are both used in the sequence of $s$ operations, we do not guarantee the $\Oh(r+s)$ running time.}
\end{lemma}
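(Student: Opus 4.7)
My plan is to represent a $\uTr$-pointer as a pair $(\nu, k)$, where $\nu$ is a $\Tr$-pointer (stored as the functional stack of $\Tr$-ancestors described earlier) and $k = \level{\unu}$; the initial stack for $\unu_0$ has $\Oh(r)$ entries and can be built in $\Oh(r)$ time by descending from the root of $\Tr$. Both $\parent{\cdot}$ and $\child{\cdot}{j}$ on $\uTr$-pointers are already $\Oh(1)$ by the preceding observation, so the remaining work is to bound the total cost of the $\nxt{\cdot}$ operations.

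To compute $\nxt{(\nu, k)}$, I exploit \cref{fct:expand_level}: for every $k'$ with $\lvl(\symb{\nu}) \le k' < \lvl(\symb{\parent{\nu}}) - 1$, the $\uTr$-node $(\nu, k'+1)$ is the only child of its $\uTr$-parent, so the entire single-child chain above $(\nu, k)$ contains no right $\uTr$-sibling and can be skipped in $\Oh(1)$. The algorithm therefore jumps directly to $(\nu, \lvl(\symb{\parent{\nu}}) - 1)$ and checks whether $\nu$ has a right $\Tr$-sibling $\sigma = \sibling{\nu}{1}$: if so, it descends from $\sigma$ via leftmost $\Tr$-children (again skipping each implicit single-child chain in $\Oh(1)$) to the unique $\Tr$-node $\sigma''$ with $\lvl(\symb{\sigma''}) \le k < \lvl(\symb{\parent{\sigma''}})$ and returns $(\sigma'', k)$; otherwise it recurses on $(\parent{\nu}, \lvl(\symb{\parent{\nu}}))$. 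A single $\nxt{\cdot}$ thus costs $\Oh(1 + h_T + h'_T)$, where $h_T$ and $h'_T$ denote, respectively, the numbers of $\Tr$-edges climbed and descended.

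The hard part is bounding $\sum h_T$ and $\sum h'_T$ across the whole sequence. The key structural fact is that during any $\nxt{\cdot}$, the climb pops exclusively \emph{last-child} $\Tr$-ancestors (any ancestor with a right $\Tr$-sibling would have stopped the climb), while the descent pushes exclusively \emph{non-last-child} $\Tr$-ancestors (they are leftmost children of $\Tr$-parents with at least two children, thanks to the Pair/Power production shapes). Track, along the current $\Tr$-stack, the counts $L$ of last-child ancestors and $N$ of non-last-child ancestors. Each $\parent{\cdot}$ or $\child{\cdot}{j}$ changes $L + N$ by at most $1$, while each $\nxt{\cdot}$ contributes $-h_T + \Oh(1)$ to $L$ and $+h'_T + \Oh(1)$ to $N$ (the $\Oh(1)$ accounts for the sibling replacement at the level where the right sibling is found). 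Since $L, N \in [0, \Oh(r)]$ throughout, summing the balance equations over the whole sequence yields $\sum h_T \le L(0) + c_L + \Oh(n)$ and $\sum h'_T \le N(s) + p_{NL} + \Oh(n)$, where $c_L$ counts the $\child{\cdot}{j}$ operations that pushed a last-child ancestor, $p_{NL}$ counts the $\parent{\cdot}$ operations that popped a non-last-child one, and $n \le s$ is the number of $\nxt{\cdot}$ operations. Both bounds are $\Oh(r + s)$, and combined with the $\Oh(s)$ cost of the cheap operations, the total running time comes out to $\Oh(r + s)$, as claimed.
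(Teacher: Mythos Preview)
Your implementation of $\nxt{\cdot}$ is the same as the paper's: climb in $\Tr$ until a right sibling exists, step to that sibling, then descend via leftmost children until the level drops to $k$. The difference lies in the amortized analysis. The paper observes that, as one traces the $\Tr$-node $\nu_i$ through the sequence of operations, the paths $\nu_{i-1}\to\nu_i$ concatenate to a contiguous portion of the Euler tour of the subtree of $\Tr$ spanned by $\{\nu_0,\ldots,\nu_s\}$ and all their ancestors; since every internal node of $\Tr$ has at least two children, that subtree has $\Oh(s+r)$ nodes (the $s{+}1$ nodes, their $\le s$ pairwise LCAs, plus $\Oh(r)$ ancestors of $\nu_0$ and $\nu_s$), and so its Euler tour has length $\Oh(s+r)$.

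Your argument instead tracks two counters $L$ and $N$ on the current $\Tr$-stack and uses the structural fact that a $\nxt{\cdot}$ climb pops only last-child entries while the subsequent descent pushes only non-last-child entries. The telescoping bounds $\sum h_T\le L(0)+c_L+\Oh(n)$ and $\sum h'_T\le N(s)+p_{NL}+\Oh(n)$ are correct (I checked the per-operation contributions; the $\Oh(1)$ slack at the sibling swap is exactly the term you need), and $L,N\le\Oh(r)$ because $\Tr$ has height $\Oh(r)$. So your proof is valid. The Euler-tour argument is shorter and gives a cleaner picture of \emph{why} the total work is bounded; your potential-style accounting is more operational and closer to what one would write to verify an implementation. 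Either is fine here.
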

\begin{proof}
Let us first describe the implementation of $\nxt{\unu}$ provided that $\unu\in \uTr$ is represented by $(\nu,k)$, where $\nu$ is the corresponding node of $\Tr$ and $k=\level{\unu}$.
Suppose that $\nxt{\unu}=\umu$, where $\umu\in \uTr$ is analogously represented by $(\mu,k)$.
Our algorithm traverses the path from $\nu$ to $\mu$ in $\Tr$.
For this, we first traverse the path from $\nu$ towards the root of $\Tr$ (using the $\parent{\cdot}$ operation) until we encounter a node that has a right sibling.
If this does not happen before we reach the root of $\Tr$, then $\nxt{\unu}=\nil$.
Otherwise, we proceed to the right sibling (using the $\sibling{\cdot}{1}$ operation) and then descend $\Tr$ (using the $\child{\cdot}{0}$ operation) until we reach $\mu$, i.e., the first node that we encounter that satisfies $\lvl(\symb{\mu})\le k$.
Our implementation of pointers to $\Tr$ guarantees that the running time of this procedure is proportional to the length of the path from $\nu$ to $\mu$ in $\Tr$.

Consider a sequence $(\unu_i \coloneq \unu_{i-1}.\mathsf{op}_i)_{i=1}^s$ of $s$ calls, and, for $i\in\fragmentcc{0}{s}$, denote by $\nu_i$ the nodes of $\Tr$ corresponding to $\unu_i$.
Observe that the paths from $\nu_{i}$ to $\nu_{i+1}$ form subsequent fragments of the Euler tour of the subtree of $\Tr$ consisting of all the nodes $\nu_0,\ldots,\nu_s$ and their ancestors.
Since each internal node in $\Tr$ has at least two children, the subtree size is $\Oh(s+r)$: besides the $s+1$ nodes  $\nu_0,\ldots,\nu_s$ and their at most $s$ lowest common ancestors, it only contains $\Oh(r)$ ancestors of $\nu_0$ and~$\nu_s$.

The implementation and analysis of the $\prv{\cdot}$ operation is symmetric.
\end{proof}

\section{Popped sequences}
Efficient implementation of LCE queries on recompression RLSLPs relies on a notion of \emph{popped sequences} introduced by I~\cite{I17}.
We use a slightly simpler construction that is not tailored to any specific version of recompression.

\begin{definition}[Popped Sequence]\label{def:popped}
Given a string $X\in \Sigma^*$, we define strings $L_k,R_k,\hX_k\in \Symb^*$ for $k\in \Zz$ as follows.
We set $\hX_0 = X$.
For $k\ge 0$, we define $L_{k}$, $R_{k}$, and $\hX_{k+1}$ depending on the blocks of $\shrink_{k+1}(\hX_k)$:
  \begin{itemize}
    \item $L_k$ is of the leftmost block of $\hX_k$ (with respect to $\shrink_{k+1}$) unless $\hX_k$ is partitioned into zero blocks (i.e., $\hX_k=\emptystring$) or the leftmost block of $\hX_k$ consists of two distinct symbols. In these two special cases, $L_k=\emptystring$.
    \item $R_k$ is of the rightmost block of $\hX_k$ (with respect to $\shrink_{k+1}$)
    unless $\hX_k$ is partitioned into at most one block or the rightmost block of $\hX_k$ consist of two distinct symbols. In these two special cases, $R_k=\emptystring$.
    \item $\hX_{k+1}=\shrink_{k+1}(\hX'_k)$, where $\hX'_k\in \Symb^*$ is the substring of $\hX_k$ such $\hX_k = L_k \cdot \hX'_{k} \cdot R_k$.
  \end{itemize}
The \emph{popped sequence} of $X$ is $\PSeq(X)\coloneq L_0\cdots L_q\cdot R_q \cdots R_0$ for any $q\in \Zz$ such that $\hX_{q+1}=\emptystring$.\end{definition}

In other words, to construct the sequence $(\hX_k)_{k=0}^\infty$, we proceed as in Definition~\ref{def:recompression},
but we remove the leftmost and the rightmost block of $\hX_k$ before we collapse the remaining blocks to obtain $\hX_{k+1}$, unless here are no blocks left to be removed or we would remove a block of two distinct symbols (we keep such a block and the symbol it collapses into will be removed at the next level, either on its own or as a part of a longer run).
The removed blocks are concatenated (in the appropriate order) to form $\PSeq(X)$.
The following observation captures the immediate consequences of the \cref{def:popped}.

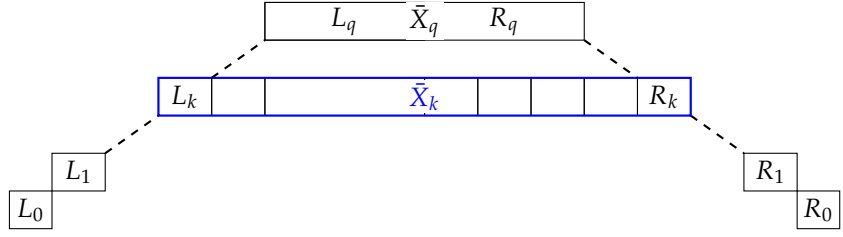
\begin{figure}
\center
\begin{tikzpicture}[yscale=0.5, xscale=0.7]
  \draw (0.2,0) rectangle (1,1);  \node at (0.6,0.5) {\(L_0\)};
  \draw (1,1) rectangle (2,2);  \node at (1.5,1.5) {\(L_1\)};

  \draw (14,1) rectangle (15,2); \node at (15.4,0.5) {\(R_0\)};
  \draw (15,1) rectangle (15.8,0); \node at (14.5,1.5) {\(R_1\)};

  \draw [thick,dashed] (2,2) -- (3,3);
  \draw [thick,dashed] (13,3) -- (14,2);

  \draw (3,3) rectangle (4,4);  \node at (3.5,3.5) {\(L_k\)};
  \draw (4,3) rectangle (5,4);
  \draw (5,3) rectangle (8,4);
  \draw (8,3) rectangle (9,4);
  \draw (9,3) rectangle (10,4);
  \draw (10,3) rectangle (11,4);
  \draw (11,3) rectangle (12,4);
  \draw (12,3) rectangle (13,4); \node at (12.5,3.5) {\(R_k\)};
  \draw[color=blue, thick] (3,3) rectangle (13,4);
  \node[color=blue,fill=white,inner sep=1pt] at (8,3.5) {\(\hX_k\)};

  \draw [thick,dashed] (4,4) -- (5,5);
  \draw [thick,dashed] (11,5) -- (12,4);

  \draw (5,5) rectangle (11,6);
  \draw (5,5) rectangle (8,6);
  \node at (6.5,5.5) {\(L_q\)};
  \node[fill=white,inner sep=1pt] at (8,5.5) {\(\hX_q\)};
  \node at (9.5,5.5) {\(R_q\)};
\end{tikzpicture}
\caption{The popped sequence is build from the blocks \(L_0\) through \(R_0\).
In every level \(k\), the string \(\hX_k\) spans from $L_k$ to $R_k$ (inclusive).}
\end{figure}

\begin{observation}\label{obs:one_run}
For every $k\in \Zz$, we have $|\shrink_{k+1}(L_k)|\le 1$, $|\shrink_{k+1}(R_k)|\le 1$,  $|\rle(L_k)|\le 1$ and $|\rle(R_k)|\le 1$, as well as $X=\exp(L_0\cdots L_{k-1}\cdot \hX_k \cdot R_{k-1} \cdots R_0)$.\lipicsEnd
\end{observation}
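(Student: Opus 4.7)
The plan is to verify the four block-size bounds by a short case analysis on the parity of $k+1$, and then establish the expansion identity by a clean induction on $k$.

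For the bounds on $|\shrink_{k+1}(L_k)|$, $|\shrink_{k+1}(R_k)|$, $|\rle(L_k)|$, and $|\rle(R_k)|$, I would unfold \cref{def:popped} and split on whether round $k+1$ is odd or even. If $k+1$ is odd, then $\shrink_{k+1}=\rle_{\Act_{k+1}}$ and every block with respect to $\shrink_{k+1}$ is either a single symbol or a maximal run of the same active symbol; in particular, no block consists of two distinct symbols, so $L_k$ (and symmetrically $R_k$) is either empty or an entire block of the form $A^m$ with $m\ge 1$. Such a block collapses to one symbol both under $\shrink_{k+1}$ and under $\rle$. If $k+1$ is even, then $\shrink_{k+1}=\pc_{\Left_{k+1},\Right_{k+1}}$ and every block is either a single symbol or a pair $BC$ with $B\in \Left_{k+1}$, $C\in \Right_{k+1}$; since $\Left_{k+1}\cap\Right_{k+1}=\emptyset$, such a pair consists of two distinct symbols, so \cref{def:popped} forces $L_k=\emptystring$ in that case, and similarly for $R_k$. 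Hence $L_k$ and $R_k$ are either empty or a single symbol, and all four bounds hold.

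For the identity $X=\exp(L_0\cdots L_{k-1}\cdot \hX_k \cdot R_{k-1}\cdots R_0)$, I would proceed by induction on $k$. The base case $k=0$ is immediate from $\hX_0=X$. For the inductive step, I use the defining decomposition $\hX_k=L_k\cdot \hX'_k\cdot R_k$ together with $\hX_{k+1}=\shrink_{k+1}(\hX'_k)$. The remarks immediately after \cref{def:rle} and \cref{def:pc} give $\exp(\hX_{k+1})=\exp(\shrink_{k+1}(\hX'_k))=\exp(\hX'_k)$, so $\exp(\hX_k)=\exp(L_k)\cdot\exp(\hX_{k+1})\cdot\exp(R_k)$. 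Substituting into the inductive hypothesis yields the claim for $k+1$, and since $\exp$ is a homomorphism on $\Symb^*$, the concatenated form follows.

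The only mildly tricky part is keeping the exceptional cases in \cref{def:popped} consistent with the bound arguments, in particular making sure that the clause "the leftmost/rightmost block consists of two distinct symbols" handles exactly the even-round pair blocks and leaves nothing extra to consider; once that is observed, the remaining verifications are bookkeeping. No separate argument is needed for the boundary case $\hX_k=\emptystring$, since then $L_k=R_k=\emptystring$ and all stated bounds are trivial.
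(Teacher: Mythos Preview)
Your proposal is correct. The paper states this observation without proof, treating it as an immediate consequence of \cref{def:popped}; your parity case analysis for the block-size bounds and your induction for the expansion identity are exactly the routine verification the paper leaves implicit.
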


A crucial property of the popped sequence is that every occurrence of $X$ in a text $T$ induces the occurrences of each non-empty string $\hX_k$ in the string $T_k$, and thus an ``occurrence'' of the entire popped sequence in the parse tree of $T$.
\begin{lemma}\label{lem:occx}
If $X\in \Sigma^+$ occurs in a text $T$ at position $i$ and $\hX_k\ne \emptystring$, then $\hX_k$ occurs in $T_k$ at a position $i_k$ such that $|\exp(T_k\fragmentco{0}{i_k})|=i+|\exp(L_0\cdots L_{k-1})|$.
\end{lemma}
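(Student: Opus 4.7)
The plan is to prove the claim by induction on $k\in \Zn$. The base case $k=0$ is immediate: $\hX_0=X$ occurs in $T_0=T$ at position $i_0=i$, and the empty concatenation $L_0\cdots L_{-1}$ has expansion length $0$, matching $|\exp(T_0\fragmentco{0}{i_0})|=i$.

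For the inductive step, I would assume that $\hX_k$ occurs in $T_k$ at a position $i_k$ satisfying the stated identity and also that $\hX_{k+1}\ne \emptystring$ (otherwise nothing has to be shown for level $k+1$). The crucial observation is that both $\rle_\Act$ and $\pc_{\Left,\Right}$ insert block boundaries based only on pairs of consecutive symbols, so the induced partition is \emph{locally determined}: for every $j\in \fragmentco{0}{|\hX_k|-1}$, there is a block boundary between positions $i_k+j$ and $i_k+j+1$ of $T_k$ if and only if there is one between positions $j$ and $j+1$ of $\hX_k$ viewed in isolation.

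The main obstacle is showing that the decomposition $\hX_k = L_k \cdot \hX'_k \cdot R_k$ is aligned with block boundaries in $T_k$, so that $\shrink_{k+1}$ acts on the occurrence of $\hX'_k$ inside $T_k$ exactly as it does on $\hX'_k$ in isolation. This requires a case analysis on the type of $\shrink_{k+1}$. When $\shrink_{k+1}=\rle_{\Act_{k+1}}$, discarding the leftmost run $L_k$ removes any ambiguity about how far that run might extend leftward inside $T_k$; the symbol immediately after $L_k$ in $\hX_k$ is guaranteed to differ from the run's symbol (or to lie outside $\Act_{k+1}$), which forces a block boundary at the corresponding position of $T_k$. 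When $\shrink_{k+1}=\pc_{\Left_{k+1},\Right_{k+1}}$ and the leftmost block of $\hX_k$ is a pair $AB$ with $A\in \Left_{k+1}$, $B\in \Right_{k+1}$, the definition sets $L_k=\emptystring$; this is safe because $A\in \Left_{k+1}$ together with the disjointness $\Left_{k+1}\cap\Right_{k+1}=\emptyset$ prevents $A$ from being paired with its left neighbor in $T_k$, so the block $AB$ survives intact. The single-symbol block case is handled analogously, relying on either $\hX_k\position{0}\notin \Left_{k+1}$ or $\hX_k\position{1}\notin \Right_{k+1}$. Symmetric reasoning handles $R_k$ at the right end of $\hX'_k$.

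Once boundary alignment is established, $\shrink_{k+1}$ replaces the occurrence of $\hX'_k$ inside $T_k$ by $\shrink_{k+1}(\hX'_k)=\hX_{k+1}$, which yields an occurrence of $\hX_{k+1}$ in $T_{k+1}$ at some position $i_{k+1}$. Because $\shrink_{k+1}$ preserves $\exp$, we have $|\exp(T_{k+1}\fragmentco{0}{i_{k+1}})| = |\exp(T_k\fragmentco{0}{i_k+|L_k|})| = |\exp(T_k\fragmentco{0}{i_k})| + |\exp(L_k)|$, which, by the inductive hypothesis, equals $i + |\exp(L_0\cdots L_k)|$, completing the induction.
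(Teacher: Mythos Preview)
Your proof is correct and follows essentially the same inductive approach as the paper: both rely on the local determination of block boundaries and then argue that the boundary at the left (resp.\ right) end of $\hX'_k$ is preserved in $T_k$, with the delicate case being $L_k=\emptystring$ (resp.\ $R_k=\emptystring$) when the extremal block consists of two distinct symbols. The only cosmetic difference is that you split the boundary argument into separate $\rle$/$\pc$ cases, whereas the paper handles the two-distinct-symbols case uniformly by noting that neither transformation can produce a block of length at least three containing two distinct symbols.
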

\begin{proof}
  We proceed by induction on $k$. The base case is trivial since $T_0=T$ and $\hX_0=X$.
  For $k\in \Zp$, the inductive hypothesis yields an occurrence of $\hX_{k-1}$ starting at position $i_{k-1}$.
  By Definitions~\ref{def:rle} and~\ref{def:pc}, $\shrink_{k}$ places block boundaries based on the identities of the two surrounding symbols.
  In particular, the block boundaries strictly inside $\hX_{k-1}$ are placed in the same way regardless of whether $\hX_{k-1}$ is processed as a standalone string or as a fragment of $T_{k-1}$.
  Moreover, if the leftmost block of $\hX_{k-1}$ consists of two distinct symbols, then there is a block boundary at the beginning of every occurrence of $\hX_{k-1}$ in $T_{k-1}$ --- this is because $\shrink_{k+1}$ cannot create a block of length at least three consisting of at least two distinct symbols.
  Symmetrically, if the rightmost block of $\hX_{k-1}$ consists of two distinct symbols, then there is a block boundary at the end of every occurrence of $\hX_{k-1}$ in $T_{k-1}$.
  In all cases, the blocks collapsed to $\hX_{k}$ are processed in the same way and induce an occurrence of $\hX_{k}$ in $T_{k}$.
  The starting position $i_{k}$ of this occurrence satisfies $|\exp(T_{k}\fragmentco{0}{i_{k}})|=|\exp(T_{k-1}\fragmentco{0}{i_{k-1}}\cdot L_{k-1})|=|\exp(T_{k-1}\fragmentco{0}{i_{k-1}})|+ |\exp(L_{k-1})|$, which is $i+|\exp(L_0\cdots L_{k-2})|+|\exp(L_{k-1})|=i+|\exp(L_0\cdots L_{k-1})|$ by the inductive hypothesis.
\end{proof}

Next, we address the task of efficiently computing a popped sequence.
\begin{lemma}\label{lem:computepseq}
  Let $\G$ be an $r$-round restricted recompression RLSLP representing a text $T$.
  Given a fragment $X=T\fragmentcc{i}{j}$, the run-length encoding of the popped sequence $\PSeq(X)$ can be computed in $\Oh(r)$ time,
  along with a decomposition $\PSeq(X)=L_0\cdots L_q\cdot R_q\cdots R_0$ for $q=\max\{k\in \Zz : \hX_k\ne \emptystring\}$.
\end{lemma}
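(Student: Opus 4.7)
My plan is to simulate \cref{def:popped} level by level while never materializing $\hX_k$ explicitly. By \cref{lem:occx}, whenever $\hX_k\neq\emptystring$ it occurs as a fragment of $T_k$, so I will maintain pointers $(\unu^L_k,\unu^R_k)$ in $\uTr$ at level $k$ to the first and last symbols of that fragment; initialization sets $\unu^L_0\coloneq\leaf{i}$ and $\unu^R_0\coloneq\leaf{j}$ using \cref{obs:leaf} in $\Oh(r)$ time.

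At each level $k$, I read off $L_k$ from a constant-sized neighborhood of $\unu^L_k$, exploiting the fact (\cref{obs:one_run}) that $\rle(L_k)$ has at most one entry. When $k+1$ is even ($\shrink_{k+1}=\pc_{\Left_{k+1},\Right_{k+1}}$), the entry is $\emptystring$ if $(\symb{\unu^L_k},\symb{\nxt{\unu^L_k}})\in\Left_{k+1}\times\Right_{k+1}$ forms a legal pair and is $\symb{\unu^L_k}$ otherwise. When $k+1$ is odd ($\shrink_{k+1}=\rle_{\Act_{k+1}}$), the entry is $\symb{\unu^L_k}$ if $\symb{\unu^L_k}\notin\Act_{k+1}$ or $\symb{\parent{\unu^L_k}}$ is not a power; otherwise it is the power $A^{m-j}$ with $(A,m)=\symb{\parent{\unu^L_k}}$ and $j=(\pos{\unu^L_k}-\pos{\parent{\unu^L_k}})/\len{A}$. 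The key combinatorial observation driving the whole induction is that, because $L_k$ has been chosen to absorb exactly the ``ambiguous'' prefix of $\hX_k$, the level-$(k{+}1)$ pointer is always reachable by a two-step $\uTr$-walk from $\unu^L_k$: $\unu^L_{k+1}\coloneq\parent{\nxt{\unu^L_k}}$ in even rounds and $\unu^L_{k+1}\coloneq\nxt{\parent{\unu^L_k}}$ in odd rounds. Indeed, once $L_k$ is popped, the left end of $\hX'_k$ sits on a $T_k$-block boundary, so its $\uTr$-parent---which coincides with the two-step path above---is the first $T_{k+1}$-symbol of $\hX_{k+1}$ by \cref{lem:occx}. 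The right endpoint $\unu^R_k$ evolves symmetrically, with $\prv{\unu^R_k}$ in place of $\nxt{\unu^L_k}$, and I terminate the main loop once $\unu^L_k=\unu^R_k$ (single-symbol $\hX_k$), or $\parent{\unu^L_k}=\parent{\unu^R_k}$ in an odd round (so $\hX_k$ already lies inside a single $T_k$-run, which is absorbed into $L_k$ in full), or a simple position comparison reveals $\hX_{k+1}=\emptystring$.

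For the running time, the left endpoint evolves as a single chain of $\Oh(r)$ operations drawn from $\{\parent{\cdot},\nxt{\cdot}\}$ starting at $\unu^L_0$, and the right endpoint as a separate chain drawn from $\{\parent{\cdot},\prv{\cdot}\}$ starting at $\unu^R_0$; applying \cref{lem:traverse} to each chain bounds the total traversal time by $\Oh(r)$. The run-length decomposition $L_0,\dots,L_q,R_q,\dots,R_0$ comprises $\Oh(r)$ single-entry pieces and is output along the way. The step I expect to be most delicate is the case analysis behind the two-step-advance identity: for each of the four main configurations (odd vs.\ even round, $L_k$ empty vs.\ nonempty) I must verify that $L_k$ has been chosen precisely to push the left boundary onto a $T_k$-block boundary---once this is in place, \cref{lem:occx} closes the induction and the $\Oh(r)$ time bound follows immediately from \cref{lem:traverse}.
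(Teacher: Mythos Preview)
Your plan is essentially the paper's proof: maintain two $\uTr$-pointers to the endpoints of the occurrence of $\hX_k$ in $T_k$, read off $\rle(L_k)$ and $\rle(R_k)$ from a constant-size neighborhood, advance each pointer by a bounded $\{\mathtt{parent},\mathtt{next}\}$ (resp.\ $\{\mathtt{parent},\mathtt{prev}\}$) walk, and invoke \cref{lem:traverse} for the $\Oh(r)$ bound. The paper treats odd and even rounds uniformly (it always moves to $\unu'_k\coloneq\parent{\unu_k}$ first and tests whether $\unu'_k$ has two children with \emph{distinct} symbols and whether $\unu_k$ is the left one), whereas you split by parity and use the equivalent but cosmetically different advances $\parent{\nxt{\cdot}}$ versus $\nxt{\parent{\cdot}}$.

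There is one genuine gap. Your case analysis tests membership in $\Left_{k+1}$, $\Right_{k+1}$, and $\Act_{k+1}$, but the RLSLP representation assumed here stores only $\rhs(\cdot)$, $\len{\cdot}$, and $\lvl(\cdot)$ for each non-terminal; the partition sets themselves are not available, and the paper stresses that no further preprocessing of the grammar is needed. Relatedly, your odd-round test ``$\symb{\parent{\unu^L_k}}$ is not a power'' is unsound: if the $T_k$-block containing $\unu^L_k$ has length one, then $\parent{\unu^L_k}$ has a single child in $\uTr$, so $\symb{\parent{\unu^L_k}}=\symb{\unu^L_k}$, which may well be a power symbol created in some earlier odd round. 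Both issues disappear if you replace the set-membership and symbol-shape tests with pure tree-structure tests that use only the stored data: check whether $\parent{\unu^L_k}$ has more than one child (equivalently $\lvl(\symb{\parent{\unu^L_k}})=k+1$, by \cref{fct:expand_level}), whether those children carry distinct symbols, and which child $\unu^L_k$ is. That is exactly how the paper determines $L_k$ and $R_k$, and with this fix your plan goes through unchanged.
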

\begin{proof}
For each round $k$ such that $\hX_k\ne \emptystring$, our algorithm assumes to be given (pointers to) nodes $\unu_k,\umu_k\in \uTr$ representing, respectively, the leftmost and the rightmost symbol within the occurrence of $\hX_k$ in $T_k$ stipulated by Lemma~\ref{lem:occx}.
For $k=0$, the nodes $\unu_0$ and $\umu_0$ correspond to $T\position{i}$ and $T\position{j}$, respectively, and they can be obtained in $\Oh(r)$ time by Observation~\ref{obs:leaf}.

For subsequent levels $k\in \Zz$, the algorithm first considers the parents $\unu'_k\coloneq \parent{\unu_k}$ and $\umu'_k\coloneq \parent{\umu_k}$.
The nodes $\unu'_k$ and $\umu'_k$ correspond to, respectively, the leftmost and the rightmost block of $\hX_{k}$ with respect to $\shrink_{k+1}$, possibly including some extra symbols to the left or to the right of $\hX_k$, respectively.
In particular, since we already assumed that $\hX_k\ne \emptystring$, then $L_k=\emptystring$ if and only if $\unu'_k$ has two children with distinct symbols, $\unu_k$ is the left child of $\unu'_k$, and $\unu_k\ne \umu_k$ (if $\unu_k=\umu_k$, then the only block of $\hX_k$ consists of a single symbol).
Similarly, $R_k=\emptystring$ if (but not only if) $\umu_k$ has two children with distinct symbols and $\umu_k$ is the right child of $\umu'_k$.
The other possibility for $R_k=\emptystring$ is when $L_k\ne \emptystring$ and $\unu'_k=\umu'_k$ (that is, $\hX_k$ consists of a single block that does not have two distinct symbols).
In this special case,  $L_k=\hX_k$ is the power of $\symb{\unu_k}$ with exponent equal to the number of siblings between $\unu_k$ and $\umu_k$ (inclusive).

Except for the special case, we have $\unu_{k+1}=\unu'_k$ if $L_k = \emptystring$.
Otherwise, $\unu_{k+1}=\nxt{\unu'_k}$ and, by \cref{obs:one_run}, $L_k$ is a power of $\symb{\unu_k}$ with exponent equal to the number of right siblings of $\unu_k$ (including $\unu_k$).
Symmetrically, $\umu_{k+1}=\umu'_k$ if $R_k = \emptystring$.
Otherwise, $\umu_{k+1}=\prv{\umu'_k}$ and, by \cref{obs:one_run},  $R_k$ is a power of $\symb{\umu_k}$ with exponent equal to the number of left siblings of $\umu_k$ (including $\umu_k$).

Before we proceed to the next level, we must stop when $\hX_{k+1}= \emptystring$; this condition holds when $L_k\ne \emptystring$ and $\unu'_k = \umu'_k$ (the special case considered above) as well as when $L_k\ne \emptystring$, $R_k\ne \emptystring$, and $\nxt{\unu'_k}=\umu'_k$.

The sequence of $\parent{\cdot}$ and $\nxt{\cdot}$ operations involving nodes $\unu_k$ and $\unu'_k$, as well as the sequence of $\parent{\cdot}$ and $\prv{\cdot}$ operations involving nodes $\umu_k$ and $\umu'_k$ satisfy the conditions of \cref{lem:traverse}, so the overall running time is $\Oh(q+r)=\Oh(r)$.
\end{proof}

\subsection{LCE Queries in $\Oh(r)$ time}\label{sec:lce}
Recall that an LCE query, given positions $i$ and $i'$ in $T\in \Sigma^n$, returns
  \[\LCE_T(i,i')=\max\{d\in \fragmentcc{0}{n-\max(i,i')} :T\fragmentco{i}{i+d}=T\fragmentco{i'}{i'+d}\}.\]

As a sample application of our implementation of parse tree nodes and our variant of popped sequences, we show how to obtain $\Oh(r)$-time LCE queries in our setting.
Our recursive procedure answering LCE queries is an easy to implement version of the algorithm originating from~\cite[Theorem 5.25]{KK23} and analyzed using popped sequences in the preliminary version of this paper~\cite{DK24}.

\begin{proposition}\label{prp:lce}
  LCE queries on a text $T$  represented using an $r$-round restricted recompression run-length straight-line program can be answered in $\Oh(r)$ time.
  \end{proposition}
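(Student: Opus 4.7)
My strategy is to imitate the popped-sequence procedure of \cref{lem:computepseq}, running it simultaneously on the two (unknown) occurrences at positions $i$ and $i'$ of $X \coloneq T\fragmentco{i}{i+\LCE_T(i,i')}$. For each level $k \ge 0$ I maintain pointers $\unu_k, \umu_k \in \uTr$ at level $k$ pointing to the leftmost symbols of $\hX_k$ in the two occurrences of $\hX_k$ in $T_k$ stipulated by \cref{lem:occx}, together with a running match length $\ell = |\exp(L_0 \cdots L_{k-1})|$. Initially, $\unu_0 \coloneq \leaf{i}$ and $\umu_0 \coloneq \leaf{i'}$ are obtained in $\Oh(r)$ time via \cref{obs:leaf}, and $\ell \coloneq 0$.

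At level $k$, I attempt to pop off the leftmost block $L_k$ on both sides at once. By \cref{obs:one_run}, $L_k$ is a single maximal run of one symbol $A$, so the candidate block on each side can be identified in $\Oh(1)$ time by inspecting $\symb{\unu_k}$ and $\parent{\unu_k}$ (whose power production, if any, encodes the length of the surrounding run), and symmetrically for $\umu_k$. If the two candidate blocks have the same root symbol and the same exponent, the popping succeeds: I advance $\unu_{k+1}$ and $\umu_{k+1}$ past them using $\nxt{\cdot}$ or $\parent{\cdot}$ exactly as in \cref{lem:computepseq}, update $\ell$ by the corresponding expansion length, and proceed to level $k+1$. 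Otherwise, the LCE ends inside the current pair of symbols; the exact offset is read off the run-length comparison, and I descend through $\child{\cdot}{0}$ on both sides to pinpoint the mismatch, adding the resulting offset to $\ell$. Termination also happens if $\hX_k$ becomes empty in the popping, which means that one of the two suffixes of $T$ has been exhausted and $\ell = |T| - \max(i, i')$.

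Each level contributes $\Oh(1)$ operations of the form $\parent{\cdot}$, $\nxt{\cdot}$, or $\child{\cdot}{\cdot}$ to each of the two pointer trajectories; crucially, $\prv{\cdot}$ is never used. By \cref{lem:traverse}, each trajectory runs in $\Oh(r)$ total time, so the whole algorithm takes $\Oh(r)$. The main obstacle is correctness: I must show by induction on $k$ that equality of popped blocks on both sides at every level $k' < k$ genuinely implies that the accumulated $\ell$ equals the number of characters of $T$ matched so far from both starting positions. This reduces to applying \cref{lem:occx} to the prefix of $X$ of length $\ell$ and combining it with the single-run structure of each $L_{k'}$ from \cref{obs:one_run}, which guarantees that the per-level check captures precisely the right structural invariant and that the advance of $\unu_{k+1}, \umu_{k+1}$ is uniquely determined in constant time.
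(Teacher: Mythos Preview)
Your ascent phase is a reasonable idea, and it is true that as long as the ``candidate left blocks'' you read off from $\parent{\unu_k}$ and $\parent{\umu_k}$ agree, the accumulated $\ell$ is a valid lower bound on the LCE. However, the invariant you state --- that $\unu_k,\umu_k$ point to the leftmost symbols of $\hX_k$ for $X=T\fragmentco{i}{i+\LCE_T(i,i')}$ --- is not what your algorithm actually maintains: the block you pop is determined by the parse of the \emph{suffix} $T\fragmentco{i}{|T|}$, not of $X$, and these differ whenever the right boundary of $\hX_k$ cuts into the leftmost block. This does not break the lower bound $\ell$, but it means your appeal to \cref{lem:occx} does not justify the invariant as stated.

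The real gap is the mismatch phase. Your assertion that ``the LCE ends inside the current pair of symbols'' is unjustified and in general false: once the candidate blocks first disagree at level~$K$, the two level-$K$ symbols $P=\symb{\unu_K}$ and $Q=\symb{\umu_K}$ may satisfy $P\ne Q$ yet have $\exp(P)$ a proper prefix of $\exp(Q)$ (e.g.\ $P=(a,2)$ and $Q=(a,3)$ after run-length encoding, or $P=(A,B)$ and $Q=(A,C)$ after pair compression). Then the remaining LCE extends past $\exp(P)$, so pinpointing the mismatch requires moving to $\nxt{\unu_K}$ and comparing against a position strictly inside $Q$ --- lateral movement across nodes at different depths, not merely ``descending through $\child{\cdot}{0}$.'' In the pair example, $\child{P}{0}=\child{Q}{0}=A$, so descending to leftmost children tells you nothing; you must advance to $\child{\cdot}{1}$ and possibly beyond the current subtree. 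You give no algorithm for this, and no argument that it terminates in $\Oh(r)$ steps. The paper avoids this by abandoning the two-phase structure: it runs a single recursion $\mathtt{lce}(\nu,\nu')$ on $\Tr$ with six explicit cases (handling unequal expansion lengths, matching symbols with sibling advances, and the $\nxt{\cdot}$ move), and bounds the total work via a ``matching pairs'' argument that shows the visited nodes lie in the $\Oh(r)$-size hull of the popped sequence of $T\fragmentco{i}{i+\LCE_T(i,i')}$.
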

  \begin{proof}
    \newcommand{\lce}[2]{\mathtt{lce}(#1,#2)}

    We implement a recursive procedure $\lce{\nu}{\nu'}$ that, given (pointers to) nodes $\nu,\nu'\in \Tr$, returns $\LCE_T(\pos{\nu},\pos{\nu'})$.
    To streamline the presentation, we also assume $\lce{\nu}{\nil}=\lce{\nil}{\nu'}=0$.

    Our implementation relies on a helper function $\nxt{\nu}$ that returns the highest node $\mu$ such that $\nexp{\mu}$ immediately follows $\nexp{\nu}$, that is, $\pos{\mu}=\pos{\nu}+\len{\symb{\nu}}$.
    Formally,
    \begin{itemize}
      \item If $\parent{\nu}=\nil$, then $\nxt{\nu}=\nil$, i.e., $\nxt{\nu}$ is undefined if $\nu$ is the root of $\Tr$.
      \item If $\sibling{\nu}{1}\ne \nil$, then $\nxt{\nu}=\sibling{\nu}{1}$, i.e., $\nxt{\nu}$ is the right sibling of $\nu$ if such a sibling exists.
      \item Otherwise, $\nxt{\nu}=\nxt{\parent{\nu}}$. In this case, $\nu$ does not have any right sibling so $\pos{\nu}+\len{\symb{\nu}}=\pos{\parent{\nu}}+\len{\symb{\parent{\nu}}}$ and the value $\nxt{\nu}$ can be computed recursively.
    \end{itemize}

    We can now implement $\lce{\nu}{\nu'}$ assuming $\nu\ne \nil$ and $\nu'\ne \nil$.
    \begin{enumerate}
      \item\label{it:lce:1} If $\len{\symb{\nu}} = \len{\symb{\nu'}} = 1$ and $\symb{\nu}\ne \symb{\nu'}$, then we return $0$.
      \item\label{it:lce:2} If $\len{\symb{\nu}} = \len{\symb{\nu'}} > 1$ and $\symb{\nu}\ne \symb{\nu'}$, then we return $\lce{\child{\nu}{0}}{\child{\nu'}{0}}$.
      \item\label{it:lce:3} If $\len{\symb{\nu}} > \len{\symb{\nu'}}$, then we return $\lce{\child{\nu}{0}}{\nu'}$.
      \item\label{it:lce:4} If $\len{\symb{\nu}} < \len{\symb{\nu'}}$, then we return $\lce{\nu}{\child{\nu'}{0}}$.
      \item\label{it:lce:5} If $\symb{\nu}=\symb{\nu'}$ and $\sibling{\nu}{1}=\nil$ or $\sibling{\nu'}{1}=\nil$, then we return $\len{\symb{\nu}}+\lce{\nxt{\nu}}{\nxt{\nu'}}$.
      \item\label{it:lce:6} Otherwise, $\symb{\nu}=\symb{\nu'}$ and $\sibling{\nu}{1}\ne\nil \ne \sibling{\nu}{1}$.
      We pick the largest $d$ such that $\sibling{\nu}{d}\ne\nil \ne \sibling{\nu}{d}$ and return $d\cdot \len{\symb{\nu}} + \lce{\sibling{\nu}{d}}{\sibling{\nu'}{d}}$.
    \end{enumerate}

  Let us justify the correctness of the algorithm.
  In case~\ref{it:lce:1}, we have $\T\position{\pos{\nu}}=\symb{\nu}\ne \symb{\nu'}=T\position{\pos{\nu'}}$, so the algorithm correctly returns $\LCE_T(\pos{\nu},\pos{\nu'})=0$.
  In the remaining cases, we make a recursive call, which we denote $\lce{\mu}{\mu'}$.
  In cases~\ref{it:lce:2}--\ref{it:lce:4}, we have $\pos{\mu}=\pos{\nu}$ and $\pos{\mu'}=\pos{\nu'}$, so the value $\LCE_T(\pos{\nu},\pos{\nu'})=\LCE_T(\pos{\mu},\pos{\mu'})$ is computed correctly.
  Moreover, whenever $\mu=\child{\nu}{0}$ or $\mu'=\child{\nu'}{0}$, we have $\len{\symb{\nu}}>1$ or $\len{\symb{\nu'}}>1$, respectively, so the leftmost children exist.
  In the remaining cases, we have $\symb{\nu}=\symb{\nu'}$ and thus $\LCE_T(\pos{\nu},\pos{\nu'})=\len{\symb{\nu}}+\LCE_T(\pos{\nu}+\len{\symb{\nu}},\pos{\nu'}+\len{\symb{\nu}})$.
  In case~\ref{it:lce:5}, we specifically pick $\mu=\nxt{\nu}$ and $\mu'=\nxt{\nu'}$ so that $\pos{\mu}=\pos{\nu}+\len{\symb{\nu}}$ and $\pos{\mu'}=\pos{\nu'}+\len{\symb{\nu'}}$, and hence the returned value is correct. (This argument works even if $\mu=\nil$ or $\mu'=\nil$ as long as we assume $\pos{\nil}=|T|$.)
  The same is true in case~\ref{it:lce:6} provided that $d=1$.
  If $d\ge 2$, on the other hand, then the parents of $\nu$ and $\nu'$ have at least three children each, so the underlying symbols must be runs.
  In this case, all the siblings of $\nu$ and $\nu'$ have the same symbol $\symb{\nu}$.
  Consequently, $\LCE_T(\pos{\nu},\pos{\nu'})=d\cdot \len{\symb{\nu}}+\LCE_T(\pos{\nu}+d\cdot \len{\symb{\nu}},\pos{\nu'}+d\cdot \len{\symb{\nu}})=d\cdot \len{\symb{\nu}}+\LCE_T(\pos{\mu},\pos{\mu'})$, and hence the returned value is correct.
  Finally, observe that the algorithm terminates because, throughout the recursion, both $\nu$ and $\nu'$ move forward with respect to the pre-order traversal of $\Tr$.

  In order to compute $\ell\coloneq \LCE_T(i,i')$, we run $\lce{\nu}{\nu'}$, where $\nu$ and $\nu'$ are the highest nodes such that $\pos{\nu}=i$ and $\pos{\nu'}=i'$, respectively.
  Such nodes can be obtained in $\Oh(r)$ time by initializing $\nu$ and $\nu'$ to be the root of $\Tr$ and repeatedly setting $\child{\nu}{\idx{\nu}{i}}$ or $\child{\nu'}{\idx{\nu'}{i'}}$ as long as $\pos{\nu}< i$ or $\pos{\nu'}< i'$, respectively.
  It remains to analyze the time of evaluating $\LCE_T(i,i')$.
  For this, we say that nodes $(\nu,\nu')$ form a \emph{matching pair}
  if  $\symb{\nu}=\symb{\nu'}$ and $T\fragmentco{i}{\pos{\nu}}=T\fragmentco{i'}{\pos{\nu'}}$.
  Denote by $N$ and $N'$ the set of nodes $\nu$ and $\nu'$ participating in matching pairs; these pairs form a perfect matching between $N$ and $N'$.
  We claim that the query algorithm satisfies the following additional invariant:
  No proper ancestor of $\nu$ belongs to $N$ and no proper ancestor of $\nu'$ belongs to $N'$.
  The invariant is satisfied at the beginning because all proper ancestors of $\nu$ and $\nu'$ have their expansions starting before position $i$ and $i'$, respectively.
  In cases~\ref{it:lce:2} and~\ref{it:lce:4}, the node $\nu$ does not form a matching pair with any proper ancestor of $\nu'$ (they do not belong to $N'$ by our invariant),
  with $\nu'$ itself (because $\symb{\nu}\ne \symb{\nu'}$), nor with any proper descendant of $\nu'$ (their symbols are shorter than $\len{\symb{\nu'}}\le \len{\symb{\nu}}$). Thus, $\nu$ does not belong to $N$.
  By symmetry, the invariant remains satisfied in cases~\ref{it:lce:2} and~\ref{it:lce:3} when we replace $\nu'$ by its leftmost child.
  In cases~\ref{it:lce:5} and~\ref{it:lce:6}, proper ancestors of the nodes $\mu$ and $\mu'$ are proper ancestors of the nodes $\nu$ and~$\nu'$.

  Let $\hat{N}$ consist of nodes that do not belong to $N$ yet have descendants in $N$, and define $\hat{N}'$ analogously based on $N'$ instead of $N$.
  As we trace the value $\nu$ in subsequent recursive calls (including the evaluation of the $\nxt{\cdot}$ function), we obtain an Euler tour traversal of a subtree of $\Tr$ consisting of selected nodes in $\hat{N}$, their children, and possible the ancestors of $\leaf{i+\ell}$ (if $\leaf{i+\ell}\ne \nil$)
  Similarly, $\nu'$ visits selected nodes in $\hat{N}'$, their children, and possibly the ancestors of $\leaf{i'+\ell}$ (if $\leaf{i'+\ell}\ne \nil$).
  Moreover, Case~\ref{it:lce:6} guarantees that, for any node in $\hat{N}$ or $\hat{N}'$, we visit at most two among its children in $N$ or $N'$, respectively.
  Thus, the total runtime of the algorithm is $\Oh(|\hat{N}|+|\hat{N}'|+r)$.

  It remains to prove that $|\hat{N}|=\Oh(r)$ and, by symmetry, $|\hat{N}'|=\Oh(r)$.
  For this, observe that Lemma~\ref{lem:occx} applied to $T\fragmentco{i}{i+\ell}=T\fragmentco{i'}{i'+\ell}$ implies that the nodes participating in the popped sequence $\PSeq(T\fragmentco{i}{i+\ell})$ belong to $N$, so $\hat{N}$ may only contain proper ancestors of these nodes.
  By Definition~\ref{def:popped}, the nodes in the popped sequence have $\Oh(r)$ parents in total (since nodes within each $L_k$ and $R_k$ have a single parent). These nodes also have $\Oh(r)$ proper ancestors because the parse tree is of height $\Oh(r)$ and does not contain degree-one vertices.
  \end{proof}

  We remark that a representation of $T$ as an $r$-round restricted recompression RLSLP constitutes an analogous representation of the reverse text $\overline{T}$ (it suffices to reverse the right-hand side of every production), so
  Proposition~\ref{prp:lce} can also be used to answer the following queries $\overline{\LCE}_T$ queries:
  \[\overline{\LCE}_T(i,i') := \max\{ d\in\fragmentcc{0}{\min (i,i')} : T\fragmentco{i-d}{i}= T\fragmentco{i'-d}{i'}\}.\]

\section{IPM query algorithm: Overview}\label{sec:algorithm}
Recall that the query algorithm is given two fragments $X$ and $Y$ of the text $T$ satisfying $|Y|< 2|X|$,
and the goal is to identify exact occurrences of $X$ contained within $Y$.
Our $\Oh(r)$-time query algorithm performs the following steps:
\begin{enumerate}
\item Define $\ell=\max\{k \in \Zz: |\hX_k|>k\}$, where $(\hX_k)_{k=0}^\infty$ is introduced in Definition~\ref{def:popped} along with the popped sequence $\PSeq(X)$. As proved in Section~\ref{sec:level}, $\rle(\hX_\ell)$ can be constructed in $\Oh(r)$ time.
\item Define an appropriate fragment $Y'_\ell$ of $T_\ell$ such that every occurrence of $X$ within $Y$ yields an occurrence of $\hX_\ell$ within $Y'_\ell$. As proved in Section~\ref{sec:find}, we can ensure that $|Y'_\ell|=\Oh(|\hX_\ell|)$ and $\rle(Y'_\ell)$ can be constructed in $\Oh(r)$ time.
\item Use pattern matching in run-length encoded strings to find the occurrences of $\hX_\ell$ in $Y'_\ell$, represented as $\Oh(1)$ arithmetic progressions; see Section~\ref{sec:pm}.
\item For each progression, use $\Oh(1)$ LCE queries (in $\Oh(r)$ time each) to find the occurrences of $\hX_\ell$ in $Y'_\ell$ that extend to occurrences of $X$ in $Y$; see Section~\ref{sec:verify}.
\end{enumerate}

\section{\texorpdfstring{\boldmath Constructing the proxy pattern \(\hX_\ell\)}{Constructing the proxy pattern X̄ₗ}}\label{sec:level}
Observe that, for every non-empty fragment $X$ of $T$, we have $|\hX_0|=|X|>0$ and $|\hX_k|=0$ for $k\in \mathbb{Z}_{>r}$, so $\ell:=\max\{k \in \Zz : |\hX_k|>k\}$ is well-defined.

\begin{lemma}\label{lem:hx}
  The level $\ell:=\max\{k \in \Zz : |\hX_k|>k\}$ and the run-length encoding $\rle(\hX_\ell)$ of the proxy pattern can be constructed in $\Oh(r)$ time.
\end{lemma}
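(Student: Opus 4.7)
The plan is to build on \cref{lem:computepseq} (the popping procedure) and then read off $\rle(\hX_\ell)$ from the uncompressed parse tree once $\ell$ has been identified. First, invoking \cref{lem:computepseq}, I obtain in $\Oh(r)$ time the RLE of $\PSeq(X)=L_0\cdots L_q\cdot R_q\cdots R_0$ together with, for every level $k$ with $\hX_k\ne\emptystring$, pointers $\unu_k,\umu_k\in\uTr$ delimiting the occurrence of $\hX_k$ in $T_k$, pointers $\unu'_k,\umu'_k$ to their parents at level $k+1$, and the sizes $|L_k|,|R_k|$. Since $|\hX_k|\ge 1$ is required for $|\hX_k|>k$, we have $\ell\le q=\Oh(r)$.

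Second, I would identify $\ell$ by maintaining a running value of $|\hX_k|$ across levels. Using $\hX_k=L_k\cdot\hX'_k\cdot R_k$ and $\hX_{k+1}=\shrink_{k+1}(\hX'_k)$, one gets
\[
|\hX_k|\;=\;|L_k|+|R_k|+\sum_{\nu} c(\nu),
\]
where $\nu$ ranges over the $|\hX_{k+1}|$ level-$(k+1)$ nodes from $\unu_{k+1}$ to $\umu_{k+1}$ and $c(\nu)$ is the number of children of $\nu$ in $\uTr$. By \cref{fct:expand_level}, $c(\nu)=|\rhs(\symb{\nu})|$ when $\level{\nu}=\lvl(\symb{\nu})$ and $c(\nu)=1$ otherwise, so each term is read off in $\Oh(1)$. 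Because $(|\hX_k|)_k$ is non-increasing, the predicate $|\hX_k|>k$ holds on a prefix of levels and $\ell+1$ is the smallest level where it fails. I would sweep the levels while maintaining $|\hX_k|$ only up to the cap $k+2$---aborting the partial sum at level $k+1$ as soon as the cap is exceeded---and splice the required $\nxt{\cdot}$ walks into the single amortized traversal afforded by \cref{lem:traverse} so that the total cost of this step stays within $\Oh(r)$.

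Third, with $\ell$ determined, I would assemble $\rle(\hX_\ell)$ by traversing the level-$(\ell+1)$ nodes from $\unu'_\ell$ to $\umu'_\ell$ via $\nxt{\cdot}$. The traversal visits $|\hX_{\ell+1}|+\Oh(1)\le\ell+\Oh(1)=\Oh(r)$ nodes and, by \cref{lem:traverse}, runs in $\Oh(r)$ time. For each visited node $\nu$ the block it represents at level $\ell$ is read off $\rhs(\symb{\nu})$ in $\Oh(1)$: a pair production $\rhs(\symb{\nu})=BC$ emits the two runs $(B,1),(C,1)$; a power production $\rhs(\symb{\nu})=B^m$ emits $(B,m)$; a pass-through node $\nu$ with $\level{\nu}>\lvl(\symb{\nu})$ emits $(\symb{\nu},1)$. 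The boundary blocks contributed by $\unu'_\ell$ and $\umu'_\ell$ are truncated using $|L_\ell|$ and $|R_\ell|$ so that only the children inside $\hX_\ell$ are kept, and adjacent runs with identical symbols are merged at the end to produce $\rle(\hX_\ell)$.

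The hard part will be Step~2: maintaining $|\hX_k|$ (or the capped proxy sufficient to decide the cutoff) within the $\Oh(r)$ budget. Unlike the clean $\rle(\hX_\ell)$ extraction, intermediate levels may have $|\hX_{k+1}|$ much larger than~$k$, so the cap on the partial sum and the corresponding $\nxt{\cdot}$ traversals must be woven into a single coherent walk to which \cref{lem:traverse} applies, rather than started afresh at each level.
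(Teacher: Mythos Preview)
Your Step~2 contains a genuine gap. You propose to decide $|\hX_k|>k$ at each level by walking the level-$(k{+}1)$ nodes via $\nxt{\cdot}$ and summing child counts, capping the partial sum at $k{+}2$. The cap helps only at the single level $k=\ell$; for every level $k>\ell$ you have $|\hX_k|\le k<k{+}2$, so the cap is never hit and you traverse all $|\hX_{k+1}|$ nodes. The total work is therefore at least $\sum_{k>\ell}|\hX_k|$, and this sum is \emph{not} bounded by $\Oh(r)$: if $|\hX_k|$ decreases roughly linearly (say $|\hX_k|\approx q-k$), then $\ell\approx q/2$ and the sum is $\Theta(q^2)=\Theta(r^2)$. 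Such slow decay is compatible with restricted recompression, where inactive symbols may pass through many levels unchanged. Your appeal to \cref{lem:traverse} does not rescue this: that lemma yields $\Oh(r+s)$ time for a sequence of $s$ operations, so splicing the walks into one sequence still costs $\Omega(s)=\Omega\bigl(\sum_{k>\ell}|\hX_k|\bigr)$. Moreover, going from the right end of a level-$(k{+}1)$ walk back to the left end of a level-$k$ walk moves backwards in the Euler tour, so the monotone-walk amortization underlying \cref{lem:traverse} does not even apply cleanly.

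The paper sidesteps this by never re-traversing a level. It works purely with the popped sequence $L_0,\ldots,L_q,R_q,\ldots,R_0$ and maintains, top-down from $k=q$, the \emph{multiset} $M_k=\{\hX_k\position{i}\}$ using a bucket priority queue keyed by $\lvl(\cdot)$: at level $k$ it inserts the symbols of $L_k,R_k$ and expands every symbol $A\in M$ with $\lvl(A)=k{+}1$ into $\rhs(A)$. Because every deletion is immediately followed by at least two insertions, $|M|$ is strictly increasing, so one can stop mid-transformation the first time $|M|>k$ and conclude $\ell=k$; the total number of multiset operations is at most $3|M|=\Oh(\ell)$. In tree language this is exactly ``maintain the frontier of nodes and expand downward,'' not ``walk across each level afresh''---and that distinction is what buys the $\Oh(r)$ bound. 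Your Step~3 is essentially fine (it is the paper's final step phrased on the tree instead of on $\hX_{\ell+1}$), but you should be aware that \cref{lem:computepseq} as stated returns only $\rle(\PSeq(X))$ and its decomposition, not the internal pointers $\unu_k,\umu_k$; extracting them is harmless but should be said explicitly.
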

\begin{proof}
  Let us first focus on computing the level $\ell$. For this, we use \cref{lem:computepseq} to compute $\rle(\PSeq(X))$ decomposed as $\rle(L_0)\cdots \rle(L_q)\cdot \rle(R_q)\cdots \rle(R_0)$, where $q=\max\{k\in \Zz : \hX_k \ne \emptystring\}$.

  For $k\in \Zz$, define $M_k$ to be the \emph{multiset} of symbols appearing in $\hX_k$, that is, $\{\hX_k\position{i} : i\in \fragmentco{0}{|\hX_k|}\}$.
  Our strategy to iterate over levels $k\in \fragmentcc{0}{q}$ in the decreasing order maintaining a multiset $M$ such that $M=M_k$ holds as soon as we complete processing level $k$.
  Initially, we set $M\coloneqq \emptyset = M_{q+1}$.
  Thus, for each level $k$, our goal is to transform $M_{k+1}$ into $M_k$.
  For this, we need to insert symbols present in $L_k$ and $R_k$.
  Additionally, we need to expand symbols of $\hX_{k+1}$ created by $\shrink_{k+1}$.
  Thus, for each symbol $A\in M$ with $\lvl(A)=k+1$, we replace $A$ with symbols of $\rhs(A)$.

  In order to efficiently implement these operations, we store $M$ as a monotone bucket priority queue~\cite{MS08}, where the priority of each symbol $A$ is its level $\lvl(A)$.
  In other words, we maintain an array $Q\fragmentcc{0}{q}$, where $Q\position{k'}$ is a list of symbols $A\in M$ such that $\lvl(A)=k'$ (each with the same multiplicity as in $M$).
  Upon inserting an element to $A$ to $M$, we simply append it to the list $Q\position{\lvl(A)}$.
  Moreover, in order to retrieve all $A\in M$ with $\lvl(A)=k+1$, we iterate over the list $Q\position{k+1}$.
  For each such symbol $A$, our algorithm removes $A$ from $M$ and inserts all symbols of $\rhs(A)$ into $M$.

  Recall that our goal is to identify the largest $k$ such that $|\hX_{k}|>k$ or, equivalently, $|M_k|>k$.
  Furthermore, observe that every operation on the set $M$ increases the size $|M|$ --- whenever we remove $A\in M$, we immediately insert $|\rhs(A)|\ge 2$ symbols of $\rhs(A)$.
  Consequently, it suffices to process subsequent integers $k$ from $q$ down to $0$ and, while transforming $M$ from $M_{k+1}$ to $M_k$, stop as soon as $|M|>k$; we are then guaranteed that $|M_k|\ge |M|>k$.
  Since the stopping condition has not been satisfied at higher levels, we can also conclude that $\ell = k$ holds in this case.

  As far as the running time of the algorithm is concerned, note that constructing $\rle(\PSeq(X))$ takes $\Oh(r)$ time.
  Initializing the array $Q$ takes $\Oh(q)=\Oh(r)$ time.
  Finally, note that each operation on $M$ takes constant time and the number of operations on $M$ does not exceed $3|M|$ (because every deletion is immediately followed by at least two insertions).
  Consequently, since $|M|\le \ell+\Oh(1)=\Oh(r)$ holds when the algorithm terminates, we conclude that the main phase of the algorithm also takes $\Oh(r)$ time.

  In the next step, we construct $\hX_{\ell+1}$.
  For this, it suffices to start with $L_{\ell+1}\cdots L_q\cdot R_q\cdots R_{\ell+1}$ and exhaustively expand every symbol at a level higher than $\ell+1$.
  Our implementation processes subsequent symbols in $L_{\ell+1}\cdots L_q\cdot R_q\cdots R_{\ell+1}$ from left to right using a recursive procedure that, for a given symbol~$A$, either outputs $A$ (if $\lvl(A)\le \ell+1$) or recursively processes symbols of $\rhs(A)$ in the left-to-right order (otherwise).
  Since each production consists of at least two symbols, the total running time of this step is $\Oh(q-\ell+|\hX_{\ell+1}|)=\Oh(r)$.

  Finally, in order to determine $\rle(\hX_\ell)$, we expand every symbol $A$ in $\hX_{\ell+1}$ with $\lvl(A)=\ell+1$ into $\rle(\rhs(A))$.
  We then prepend $\rle(L_\ell)$, append $\rle(R_\ell)$, and post-process the resulting sequence so that two powers of the same symbol are merged whenever they appear next to each other.
  This step takes $\Oh(1+|\hX_{\ell+1}|)=\Oh(r)$ time because $|\rle(\rhs(A))|\le 2$ holds for every symbol $A$.
\end{proof}

\section{\texorpdfstring{\boldmath Constructing the proxy text $Y'_\ell$}{Constructing the proxy text Yₗ'}}\label{sec:find}
The following Lemma~\ref{lem:defyp} defines the appropriate proxy text $Y'_\ell$ as a fragment of $T_\ell$ such that $\exp(Y'_\ell)$ covers a sufficiently large fragment of $Y$.
A natural choice would be to pick $Y'_\ell$ so that $\exp(Y'_\ell)$ covers the entire $Y$; unfortunately, this may result in a proxy text $Y'_\ell$ that is much longer than the proxy pattern $X'_\ell$, even with respect to run-length encoding: the fact that $X$ is covered by a length-$\Oh(\ell)$ fragment of $T_{\ell+1}$ does not imply that the same is true for $Y$.
Nevertheless, if $X$ occurs in $Y$, the fragment of $Y$ containing all the occurrences of $X$ is guaranteed to be covered by a length-$\Oh(\ell)$ fragment of $T_{\ell+1}$.
Our definition assures that $Y'_\ell$ is long enough so that every occurrence of $X$ in $Y$ yields an occurrence of $\hX_{\ell}$ in $Y'_\ell$ and, at the same time, short enough to enable efficient implementation of our query algorithm.
Specifically, we define $Y'_\ell$ so that it covers the middle position $Y$ (which is contained in every occurrence of $X$ in $Y$) and extends in both directions just enough to cover, for every occurrence $X$ containing that middle position, the induced occurrence of $\hX_\ell$ in $T_\ell$.
We quantify this extension using the length (so that the occurrences of $\hX_\ell$ in $Y'_\ell$ form $\Oh(1)$ arithmetic progressions; see Section~\ref{sec:pm}) and the number of runs (so that $\rle(Y'_\ell)$ can be constructed efficiently; see Lemma~\ref{lem:yp}).
We also ensure that $\exp(Y'_\ell)$ does not extend beyond $Y$.

\begin{lemma}\label{lem:defyp}
  Let $T\position{m}$ be the middle character in \(Y\) and let $T_\ell\position{m_\ell}$  be its ancestor in $T_{\ell}$.
  Moreover, let $e\in \fragmentco{m_\ell}{|T_\ell|}$ be the largest position such that $|T_\ell\fragmentcc{m_\ell}{e}|\le |\hX_\ell|+\ell$ and $|\shrink_{\ell+1}(T_\ell\fragmentcc{m_\ell}{e})|\le 2\ell+3$; symmetrically, let $b\in \fragmentcc{0}{m_\ell}$ be the smallest position such that $|T_\ell\fragmentcc{b}{m_\ell}|\le |\hX_\ell|+\ell$ and $|\shrink_{\ell+1}(T_\ell\fragmentcc{b}{m_\ell})|\le 2\ell+3$.
  Define the \emph{proxy text} $Y'_\ell$ as the maximal fragment of $T_\ell\fragmentcc{b}{e}$ whose expansion \(\exp(Y'_\ell)\) is a fragment of $T$ contained within \(Y\).\footnote{The proxy text is empty if there is no character in $T_\ell\fragmentcc{b}{e}$ whose expansion is contained within $Y$.}

  For every occurrence of $X$ contained in $Y$, the corresponding occurrence of $\hX_\ell$ in $T_\ell$ stipulated by  \cref{lem:occx} is contained within $Y'_\ell$.
\end{lemma}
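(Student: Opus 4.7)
The plan is to split the claim into two parts: (i) the interval $\fragmentco{i_\ell}{i_\ell+|\hX_\ell|}$ is contained in $\fragmentcc{b}{e}$, and (ii) $\exp(\hX_\ell)$ is contained in $Y$. Given both, the maximality of $Y'_\ell$ forces the occurrence to lie inside~$Y'_\ell$. Part~(ii) is immediate: \cref{obs:one_run} gives $X = \exp(L_0 \cdots L_{\ell-1} \cdot \hX_\ell \cdot R_{\ell-1} \cdots R_0)$, so $\exp(\hX_\ell)$ is a substring of $X$, which is contained in~$Y$ by assumption.

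The main technical step for~(i) is a distance bound: at every level $k \le \ell$, the ancestor position $m_k$ of $T\position{m}$ lies within distance~$k$ of the $\hX_k$-occurrence in~$T_k$. Formally, I define $\delta_k \coloneq \max(0, i_k - m_k)$ and $\delta'_k \coloneq \max(0, m_k - i_k - |\hX_k| + 1)$ and prove $\delta_k, \delta'_k \le k$ by induction. The base case is $\delta_0 = \delta'_0 = 0$ because $T\position{m}$, as the middle character of~$Y$, lies inside every occurrence of $X$ in~$Y$---a standard consequence of $|Y| < 2|X|$. For the inductive step, $m_{k+1}$ equals the $T_{k+1}$-index of the $\shrink_{k+1}$-block of $T_k$ containing position~$m_k$, and $i_{k+1}$ equals the index of the block starting at $i_k + |L_k|$; the latter uses the guaranteed boundary at $i_k+|L_k|$, which holds because $L_k$ is a single $\shrink_{k+1}$-block of $\hX_k$ whose successor inside~$\hX_k$ cannot merge with it. Hence, when $m_k < i_k + |L_k|$, $\delta_{k+1}$ counts the $\shrink_{k+1}$-block boundaries of $T_k$ in $\fragmentoc{m_k}{i_k+|L_k|}$, which I decompose as at most $\delta_k$ boundaries in $\fragmentoc{m_k}{i_k}$, none strictly inside~$L_k$ (a single block), and the guaranteed one at $i_k + |L_k|$---totalling at most $\delta_k + 1$. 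The case $m_k \ge i_k + |L_k|$ gives $\delta_{k+1} = 0$ directly, and the bound $\delta'_{k+1} \le \delta'_k + 1$ follows symmetrically.

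With $\delta_\ell, \delta'_\ell \le \ell$ in hand, I verify~(i) by plugging $i_\ell + |\hX_\ell| - 1$ into the definition of~$e$ (and symmetrically $i_\ell$ into the definition of~$b$). If $m_\ell > i_\ell + |\hX_\ell| - 1$, then $e \ge m_\ell > i_\ell + |\hX_\ell| - 1$ automatically; otherwise, the length condition reduces to $\delta_\ell \le \ell$. For the block-count condition, maximality of~$\ell$ gives $|\hX_{\ell+1}| \le \ell + 1$, and \cref{obs:one_run} together with subadditivity of $|\shrink_{\ell+1}(\cdot)|$ over concatenation yields $|\shrink_{\ell+1}(\hX_\ell)| \le |\hX_{\ell+1}| + 2 \le \ell + 3$. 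Splitting $T_\ell\fragmentcc{m_\ell}{i_\ell+|\hX_\ell|-1}$ at $i_\ell$ and applying subadditivity again, the count is bounded by $\delta_\ell + (\ell+3) \le 2\ell+3$. The symmetric calculation yields $b \le i_\ell$, completing~(i).

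The main obstacle is the induction $\delta_{k+1} \le \delta_k + 1$: the block-boundary count crucially relies on the right end of $L_k$ being a $\shrink_{k+1}$-boundary in~$T_k$, whereas the left end of~$L_k$ need not be one, because $L_k$'s block in~$T_k$ can extend leftward past position~$i_k$ into the surrounding context.
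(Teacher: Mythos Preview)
Your proof is correct and follows essentially the same approach as the paper: an induction on $k$ showing that the level-$k$ occurrence of $\hX_k$ stays within distance $k$ of a tracked anchor position, then plugging this bound together with $|\shrink_{\ell+1}(\hX_\ell)|\le |\hX_{\ell+1}|+2\le \ell+3$ into the defining conditions for $e$ and~$b$. The only cosmetic difference is the choice of anchor: the paper tracks the level-$k$ ancestor $i'_k$ of the left endpoint $T\position{i}$ of the occurrence (proving $i_k-i'_k\le k$ and then using $i\le m\Rightarrow i'_\ell\le m_\ell$ at the end), whereas you track $m_k$ directly, which forces the $\max(0,\cdot)$ in $\delta_k,\delta'_k$ but avoids the final comparison; the inductive steps are otherwise parallel.
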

\begin{proof}
  Let us fix an occurrence $T\fragmentcc{i}{j}$ of $X$ contained in $Y$.
  For every $k\in \fragmentcc{0}{\ell}$, \cref{lem:occx} yields a corresponding occurrence $T_k\fragmentcc{i_k}{j_k}=\hX_k$.
  Let us also define $T_k\position{i'_k}$ as the ancestor of $T\position{i}$ in $T_k$.

  \begin{claim}\label{clm:defyp}
  For every $k\in \fragmentcc{0}{\ell}$, we have $|T_k\fragmentco{i'_k}{i_k}|\le k$.
  \end{claim}
  \begin{claimproof}
  We proceed by induction. The claim is trivial for $k=0$ due to $i'_0=i_0=i$.
  For $k>0$, \cref{lem:occx} implies $T_k\fragmentco{0}{i_k}=\shrink_{k}(T_{k-1}\fragmentco{0}{i_{k-1}} \cdot L_{k-1})$.
  In particular, $T_k\fragmentco{i'_k}{i_k}=\shrink_{k}(L'_{k-1}\cdot T_{k-1}\fragmentco{i'_{k-1}}{i_{k-1}}\cdot L_{k-1})$, where $L'_{k-1}$ consists of the children of $T_k\position{i'_k}$ to the left of $T_{k-1}\position{i'_{k-1}}$.
  Observe that $\shrink_k$ does not place block boundaries within $L'_{k-1}\cdot T_{k-1}\position{i'_{k-1}}$ (all those symbols have $T_k\position{i'_k}$ as their parent) and within $L_{k-1}$ (by \cref{def:popped}, $L_{k-1}$ is either empty or the leftmost block of $\hX_{k-1}$).
  The remaining possible locations for block boundaries within $L'_{k-1}\cdot T_{k-1}\fragmentco{i'_{k-1}}{i_{k-1}}\cdot L_{k-1}$ are immediately after the characters of $T_{k-1}\fragmentco{i'_{k-1}}{i_{k-1}}$,
  and the number of such locations does not exceed $|T_{k-1}\fragmentco{i'_{k-1}}{i_{k-1}}|\le k-1$ by the inductive hypothesis.
  Hence, the number of blocks that $L'_{k-1}\cdot T_{k-1}\fragmentco{i'_{k-1}}{i_{k-1}}\cdot L_{k-1}$ is partitioned into (with respect to $\shrink_k$) equals $|T_{k}\fragmentco{i'_{k}}{i_k}| \le k$.
  \end{claimproof}

  The condition $|Y|<2|X|$ implies that $T\fragmentcc{i}{j}$ contains the middle position of $Y$.
  Thus, $i \le m$ and $i'_\ell \le m_\ell$.
  Consequently, \cref{clm:defyp} implies $i_\ell \le i'_\ell + \ell \le m_\ell+\ell$.
  Hence, we have $j_\ell = i_\ell + |\hX_{\ell}|-1 \le m_\ell+|\hX_{\ell}|+\ell-1$ and $|T_\ell\fragmentcc{m_\ell}{j_\ell}| \le |\hX_{\ell}|+\ell$.
  Therefore, $T_\ell\fragmentcc{m_\ell}{j_\ell}$ is either a suffix of $\hX_{\ell}$ or consists of the entire $\hX_{\ell}$ preceded by at most $\ell$ symbols.
  Since $|\shrink_{\ell+1}(\hX_\ell)|=|\shrink_{\ell+1}(L_\ell)\cdot \hX_{\ell+1}\cdot \shrink_{\ell+1}(R_\ell)|\le |\hX_{\ell+1}|+2 \le \ell+3$, we conclude that $|\shrink_{\ell+1}(T_\ell\fragmentcc{m_\ell}{j_\ell})| \le \ell+3 + \ell =2\ell+3$.

  Recall that $e\in \fragmentcc{m_\ell}{|T_\ell|}$ has been chosen as the largest index such that $|T_\ell\fragmentcc{m_\ell}{e}|\le |\hX_{\ell}|+\ell$ and $|\shrink_{\ell+1}(T\fragmentcc{m_\ell}{e})|\le 2\ell+3$.
  As proved above, the index $j_\ell$ satisfies both conditions, so we derive $j_\ell \le e$.
  A symmetric argument shows that $b \le i_\ell$, and thus $T_\ell\fragmentcc{i_\ell}{j_\ell}$ is contained within $T_\ell\fragmentcc{b}{e}$.

  By \cref{def:popped}, the expansion $\exp(T_\ell\fragmentcc{i_\ell}{j_\ell})$ is contained within $T\fragmentcc{i}{j}$ and, by our assumption, $T\fragmentcc{i}{j}$ is contained within $Y$, so the expansion $\exp(T_\ell\fragmentcc{i_\ell}{j_\ell})$ is contained within $Y$.
  By construction, $Y'_\ell$ is the maximal fragment contained within $T_\ell\fragmentcc{b}{e}$ whose expansion $\exp(Y'_\ell)$ is contained within $Y$.
  Hence, $T_\ell\fragmentcc{i_\ell}{j_\ell}$ must be contained within $Y'_\ell$.
\end{proof}

\begin{lemma}\label{lem:yp}
  The run-length encoding $\rle(Y'_\ell)$ of the proxy text $Y'_\ell$ can be constructed in \(\bigO(r)\) time.
\end{lemma}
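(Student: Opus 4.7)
The plan is to enumerate the symbols of $T_\ell\fragmentcc{b}{e}$ run by run by navigating $\uTr$ at levels $\ell$ and $\ell+1$, and then trim the endpoints so that the resulting expansion is contained within $Y$.

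First, I compute (a pointer to) $T_\ell\position{m_\ell}$ in $\Oh(r)$ time by invoking \cref{obs:leaf} to obtain $\leaf{m}$ and then walking upwards in $\uTr$ via $\parent{\cdot}$ until reaching level $\ell$; this sequence of navigation operations satisfies the preconditions of \cref{lem:traverse}.

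Next, I walk rightwards from $T_\ell\position{m_\ell}$ to locate $e$ and emit the right half of $\rle(Y'_\ell)$. Rather than advancing character by character at level $\ell$, I iterate $\nxt{\cdot}$ at level $\ell+1$ and, at each visited parent $\umu$, read off its children at level $\ell$ in $\Oh(1)$ time. By \cref{fct:expand_level}, if $\lvl(\symb{\umu}) \le \ell$, then $\umu$ has a unique child representing the same symbol; otherwise $\symb{\umu}$ was created by $\shrink_{\ell+1}$ and is either a power $(A,m)$ (yielding a single RLE entry of exponent $m$) or a pair $(B,C)$ with $B\ne C$ (yielding at most two entries). The number of children and their symbols are therefore determined directly from $\symb{\umu}$, so the first parent's unread suffix (starting at $T_\ell\position{m_\ell}$) and every subsequent parent's full contribution are both computable in $\Oh(1)$. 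I maintain the running length of $T_\ell\fragmentcc{m_\ell}{\cdot}$ and its running $\shrink_{\ell+1}$-block count, stopping as soon as either threshold from \cref{lem:defyp} would be exceeded and truncating the final block accordingly. A symmetric leftward walk from $T_\ell\position{m_\ell}$ using $\prv{\cdot}$ produces $b$ and the left half of $\rle(Y'_\ell)$.

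Since \cref{lem:defyp} ensures $|\shrink_{\ell+1}(T_\ell\fragmentcc{b}{e})|=\Oh(\ell)=\Oh(r)$, each walk visits $\Oh(r)$ parents at level $\ell+1$; the two walks are independent threads of navigation operations, each satisfying the preconditions of \cref{lem:traverse}, so the total navigation cost is $\Oh(r)$. Concatenating the two halves, merging adjacent equal-symbol RLE entries (which may be needed both at the seam and internally, since consecutive parents at level $\ell+1$ can still expand to the same level-$\ell$ symbol when that symbol lies outside $\Act_{\ell+1}$ or outside $\Left_{\ell+1}\cup\Right_{\ell+1}$), and finally shrinking or dropping the first and last RLE entries whose expansions (of length given by the stored $\len{\cdot}$) extend outside $Y$ produces $\rle(Y'_\ell)$ in $\Oh(r)$ total time. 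The main obstacle I anticipate is careful bookkeeping at the boundaries: handling a walk that terminates mid-block, avoiding double-counting $T_\ell\position{m_\ell}$ when joining the two halves, and trimming the endpoints without leaving the output in an improper RLE form.
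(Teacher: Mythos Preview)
Your proposal is correct and follows essentially the same approach as the paper: ascend to the vicinity of level $\ell+1$ from the middle leaf, traverse $\Oh(\ell)$ nodes at level $\ell+1$ in each direction using \cref{lem:traverse}, expand each such node into at most two RLE entries at level $\ell$, and trim to respect both the length threshold and the boundaries of $Y$. The only operational difference is that the paper first materializes a fixed-size window $\hat{Y}_{\ell+1}$ of $\Oh(\ell)$ symbols at level $\ell+1$ (using the explicit bound $2\ell+3$ from \cref{lem:defyp}) and performs all expansion and trimming afterwards, whereas you interleave the walk with on-the-fly maintenance of the two thresholds; either way the analysis is identical.
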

\begin{proof}
  First, we compute a pointer to $T\position{m}$ using Observation~\ref{obs:leaf},
  and then we move up until we arrive at a node of $\uTr$ representing a character $T_{\ell+1}\position{m_{\ell+1}}$ at level $\ell+1$.
  Next, we generate the following fragment of $T_{\ell+1}$: \[\hat{Y}_{\ell+1}:=T_{\ell+1}\fragmentco{\max(0,m_{\ell+1}-2\ell-2)}{\min(|T_{\ell+1}|,m_{\ell+1}+2\ell+1)}.\]
  To do this, move from $T_{\ell+1}\position{m_{\ell+1}}$ by $2\ell+2$ positions in both directions.
  This phase is implemented in $\Oh(r+\ell+1)=\Oh(r)$ time using Lemma~\ref{lem:traverse} (to move to the right) and its symmetric counterpart (to move to the left).

  To derive $\rle(Y'_\ell)$ from $\hat{Y}_{\ell+1}$, we replace $A$ with $\rle(\rhs(A))$ for every symbol $A$ with $\lvl(A)=\ell+1$, and then we trim the resulting string to make sure that it contains at most $|\hX_{\ell}|+\ell-1$ symbols on either side of $T_\ell\position{m_\ell}$ and that its expansion does not exceed beyond $Y$.
  This phase costs $\Oh(\ell+1)$ time, for a total of $\Oh(r)$ time for the entire algorithm.
\end{proof}

\section{\texorpdfstring{\boldmath Finding the occurrences of $\hX_\ell$ in $Y'_\ell$}{Finding the occurrences of X̄ₗ in Yₗ'}}\label{sec:pm}
As observed in \cite{AB92,ALV92,Chu95}, exact pattern matching in run-length encoded strings can be implemented in linear time.
Below, we adapt this result to support representing the occurrences as arithmetic progressions.

\begin{proposition}[see {\cite[Theorem 1]{Chu95}}]\label{prp:rlepm}
  Given the run-length encodings $\rle(P)$, $\rle(S)$ of strings $P,S\in \Sigma^+$, the set $\Occ(P,S)=\{i\in \fragmentcc{0}{|S|-|P|}: P=S\fragmentco{i}{i+|P|}\}$, represented by at most $\min(|\rle(S)|,\lfloor|S|/|P|\rfloor)$ arithmetic progressions with difference at most $|P|$, can be constructed in $\Oh(|\rle(P)|+|\rle(S)|)$~time.
\end{proposition}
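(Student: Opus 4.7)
My plan is to split on $m \coloneq |\rle(P)|$ and handle the single-run and multi-run cases separately; write $\rle(P) = a_1^{e_1}\cdots a_m^{e_m}$.

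For $m=1$, I would scan $\rle(S)$ once and, for every maximal $a_1$-run of $S$ of length $f \ge e_1$ starting at position $p$, emit the arithmetic progression $\{p, p+1, \ldots, p+f-e_1\}$ of common difference $1 \le |P|$. These progressions enumerate $\Occ(P,S)$ exactly; there are at most $|\rle(S)|$ of them (one per contributing run), and since the contributing runs are disjoint and each has length $\ge |P|$, there are also at most $\lfloor |S|/|P|\rfloor$ of them.

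For $m \ge 2$, the key observation is that any occurrence of $P$ at position $i$ in $S$ forces $S\position{i+e_1-1}=a_1 \ne a_2 = S\position{i+e_1}$, so $i+e_1$ must land on a run boundary of $\rle(S)$ separating an $a_1$-run from an $a_2$-run. For $m \ge 3$ the interior runs $a_2^{e_2}\cdots a_{m-1}^{e_{m-1}}$ must moreover coincide with $m-2$ consecutive runs of $\rle(S)$, and the first and last runs of $P$ must fit as suffix and prefix of the flanking $S$-runs, respectively. I would therefore run KMP over the alphabet of (symbol, exponent) pairs on $\rle(S)$ to locate all candidate interior matches in $\Oh(|\rle(P)|+|\rle(S)|)$ time and verify each in $\Oh(1)$ by inspecting the two flanking runs; for $m=2$ a direct scan of consecutive run pairs of $\rle(S)$ suffices. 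Since distinct occurrences carry distinct anchor positions $i+e_1$, there are at most $|\rle(S)|-1$ occurrences overall, which already gives the $|\rle(S)|$ bound if we emit each as a singleton progression.

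To also achieve the $\lfloor|S|/|P|\rfloor$ bound, I would post-process the sorted list of occurrences and greedily merge those whose consecutive gap is strictly less than $|P|$ into a single arithmetic progression. The Fine--Wilf theorem guarantees that any two occurrences within distance $d<|P|$ force $d$ to be a period of $P$, so each resulting cluster is indeed an arithmetic progression with common difference at most $|P|$, and consecutive clusters then occupy pairwise disjoint length-$\ge|P|$ regions of $\fragmentcc{0}{|S|-1}$, giving the desired bound. I would finally return whichever of the two representations has fewer progressions. The main obstacle I anticipate is precisely this merging step: one has to argue, via Fine--Wilf together with a periodicity propagation argument, that every dense cluster of occurrences really does sit on a single arithmetic progression with the right common difference, and then pin down the disjointness of the occupied regions in order to obtain the $\lfloor|S|/|P|\rfloor$ bound cleanly; the underlying RLE pattern-matching is by now standard.
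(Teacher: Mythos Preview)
Your overall structure mirrors the paper's proof exactly: the same split on $|\rle(P)|$, the same handling of the single-run case, and for $m\ge 2$ you essentially spell out the cited result of Chu (anchor each occurrence at the run boundary after the first run of $P$, match the interior runs verbatim, check the two flanking runs), followed by a greedy merge into arithmetic progressions, for which the paper simply invokes \cite[Fact~1.1]{KRRW15}.

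There is, however, a real gap in your merging step. Your rule ``merge consecutive occurrences whose gap is strictly less than $|P|$'' does \emph{not} guarantee that each cluster is an arithmetic progression, and no periodicity-propagation argument will rescue it because the claim is simply false. Take $P=\texttt{aabaa}$ (so $|P|=5$, with periods $3$ and $4$) and $S=\texttt{aabaabaaabaa}$: the occurrences of $P$ in $S$ are exactly at positions $0,3,7$, the consecutive gaps $3$ and $4$ are both below $|P|$, yet $\{0,3,7\}$ is not an arithmetic progression. Fine--Wilf tells you each gap is a period of $P$, but when two periods $p,q$ satisfy $p+q>|P|$ there is no reason for them to be commensurable.

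The fix is to change the greedy criterion: start a new group whenever the next occurrence lies at distance $\ge|P|$ from the \emph{first} element of the current group. Each group then has range $<|P|$ and hence fits in a window of length $<2|P|$, where the classical three-occurrences lemma forces a single arithmetic progression. The first elements of consecutive groups are now $\ge|P|$ apart and all lie in $\fragmentcc{0}{|S|-|P|}$, giving at most $\lfloor|S|/|P|\rfloor$ groups; since for $m\ge 2$ the total number of occurrences is already at most $|\rle(S)|-1$, the $\min$ bound follows without maintaining two separate representations. This is precisely what the paper's citation of \cite[Fact~1.1]{KRRW15} encapsulates.
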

\begin{proof}
  None of the papers \cite{AB92,ALV92,Chu95} considers the case of $|\rle(P)|=1$, in which reporting $\Occ(P,S)$ explicitly might be too costly.
  In this case, every occurrence of $P$ is contained within a single run of $S$.
  A run $S\fragmentco{j}{j+q}=b^q$ contains an occurrence of $P=a^p$ if and only if $a=b$ and $p\le q$; then, the occurrences form an arithmetic progression $(j,j+1,\ldots,\allowbreak j+q-p)$ with difference one.
  Each such progression corresponds to a run of length at least $|P|$ in $S$, so their number does not exceed $\min(|\rle(S)|,\lfloor|S|/|P|\rfloor)$.
  To construct them in $\Oh(p+s)$ time, we scan $\rle(S)$ maintaining the length of the already processed prefix~of~$S$.

  For $|\rle(P)|\ge 2$, the algorithm of \cite[Theorem 1]{Chu95} reports elements of $\Occ(P,S)$ one be one; the number of occurrences does not exceed $|\rle(S)|-1$.
  By \cite[Fact 1.1]{KRRW15}, if we greedily group occurrences into arithmetic progressions with a difference at most $|P|$ each, we obtain at most $\lfloor{|S|/|P|}\rfloor$ progressions in total.
\end{proof}

Applying Proposition~\ref{prp:rlepm} to $\hX_\ell$ and $Y'_\ell$, we obtain the following result:

\begin{corollary}\label{cor:hxinyp}
  Given $\rle(\hX_\ell)$ and $\rle(Y'_\ell)$, the occurrences of $\hX_\ell$ in $Y'_\ell$, represented as $\Oh(1)$ arithmetic progressions, each with a difference of at most $|\hX_\ell|$, can be computed in $\Oh(r)$ time.
\end{corollary}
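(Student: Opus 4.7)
The plan is to invoke Proposition~\ref{prp:rlepm} directly with $P \coloneq \hX_\ell$ and $S \coloneq Y'_\ell$; all work then reduces to bounding the three parameters that appear in the statement of that proposition.

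First, I would argue the running time. Lemma~\ref{lem:hx} not only constructs $\rle(\hX_\ell)$ in $\Oh(r)$ time but also implies $|\rle(\hX_\ell)| = \Oh(r)$, and likewise Lemma~\ref{lem:yp} gives $|\rle(Y'_\ell)| = \Oh(r)$. Plugging these bounds into Proposition~\ref{prp:rlepm} yields a total runtime of $\Oh(|\rle(\hX_\ell)| + |\rle(Y'_\ell)|) = \Oh(r)$.

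The difference bound is immediate from Proposition~\ref{prp:rlepm}: each returned arithmetic progression has difference at most $|\hX_\ell|$, which is exactly the bound required by \cref{cor:hxinyp}.

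The main obstacle—though still a short one—is bounding the number of progressions by $\Oh(1)$. Proposition~\ref{prp:rlepm} returns at most $\lfloor |Y'_\ell|/|\hX_\ell|\rfloor$ progressions, so it suffices to show that $|Y'_\ell| = \Oh(|\hX_\ell|)$. Here I would use the definition of $Y'_\ell$ from Lemma~\ref{lem:defyp}: since $Y'_\ell$ is a subfragment of $T_\ell\fragmentcc{b}{e}$ and both $|T_\ell\fragmentcc{b}{m_\ell}|$ and $|T_\ell\fragmentcc{m_\ell}{e}|$ are bounded by $|\hX_\ell|+\ell$, we obtain $|Y'_\ell| \le 2(|\hX_\ell|+\ell)$. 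The crucial observation is the defining property of~$\ell$: by the choice $\ell = \max\{k \in \Zz : |\hX_k| > k\}$ we have $|\hX_\ell| > \ell$, so $|Y'_\ell| < 4|\hX_\ell|$ and hence $\lfloor |Y'_\ell|/|\hX_\ell|\rfloor \le 3$. This yields at most a constant number of arithmetic progressions, completing the proof.
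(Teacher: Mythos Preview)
Your proposal is correct and follows essentially the same approach as the paper: invoke Proposition~\ref{prp:rlepm}, use Lemmas~\ref{lem:hx} and~\ref{lem:yp} for the $\Oh(r)$ size bounds on the run-length encodings, and combine the definition of $Y'_\ell$ from Lemma~\ref{lem:defyp} with the defining property $|\hX_\ell|>\ell$ to obtain $|Y'_\ell|<4|\hX_\ell|$. Your explicit bound of at most three progressions is in fact slightly sharper than the paper's ``at most four''.
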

\begin{proof}
By Lemmas~\ref{lem:hx} and~\ref{lem:yp}, both $\rle(\hX_\ell)$ and $\rle(Y'_\ell)$ are of size $\Oh(r)$.
Moreover, the definition of $Y'_\ell$ (see Lemma~\ref{lem:defyp}) guarantees that $|Y'_\ell|< 2|\hX_\ell|+2\ell < 4|\hX_\ell|$, so the algorithm of Proposition~\ref{prp:rlepm} returns at most four progressions.
\end{proof}

\newcommand{\oa}{\overline{a}}
\newcommand{\oc}{\overline{c}}
\newcommand{\ou}{\overline{u}}
\newcommand{\ov}{\overline{v}}

\section{Verifying candidate occurrences}\label{sec:verify}
  Lemma~\ref{lem:defyp} shows that every occurrence of \(X\) in \(Y\) corresponds to an occurrence of $\hX_\ell$ in $Y'_\ell$.
  Moreover, by Corollary~\ref{cor:hxinyp}, \( \Occ(\hX_\ell, Y'_\ell)\) is the union of \(\bigO(1)\) arithmetic progressions, each with a difference of at most \(|\hX_\ell|\).

  Let \(V_\ell := \{ \oa_\ell+ i\cdot g_\ell : i\in \fragmentco{0}{s_\ell}\}\) be one such arithmetic progression.
  Observe that $Y'_{\ell}\fragmentco{\oa_\ell+(i-1)\cdot g_\ell}{\oa_\ell+i\cdot g_\ell}=\hX_{\ell}\fragmentco{0}{g_\ell}$ holds for each $i\in \fragmentco{1}{s_\ell}$.
  Thus, the arithmetic progression $V_\ell$ of positions in $Y'_\ell$ corresponds to an arithmetic progression
  $V=\{\oa + i\cdot g : i\in \fragmentco{0}{s}\}$ of positions in $Y$, where $s=s_\ell$, \(g = |\exp(\hX_\ell\fragmentco{0}{g_\ell})|\), and \(Y\position{\oa}\) is the leftmost leaf in the subtree of $\uTr$ rooted at $Y'_\ell\position{\oa_\ell}$.
  Since the non-terminals store their expansion lengths, a single pass over $\rle(\hX_{\ell})$ and \(\rle(Y'_\ell)\) lets us compute the values $g$ and $\oa$, respectively.

  Consequently, we henceforth assume that each arithmetic progression $V_\ell\subseteq  \Occ(\hX_\ell, Y'_\ell)$ is already represented by the underlying arithmetic progression $V$ of positions in~$Y$; note that each such progression $V$ consists of starting positions of occurrences of $\exp(\hX_\ell)$ in $Y$, and its difference $g$ satisfies $g\le |\exp(\hX_\ell)|$.

  Define $\oc=|\exp(L_0\cdots L_{\ell-1})|$ and $c=|X|-|\exp(R_{\ell-1}\cdots R_0)|$ so that $\exp(\hX_{\ell})=X\fragmentco{\oc}{c}$.
  Let \(V = \{ \oa + i\cdot g : i\in\fragmentco{0}{s}\}\) be any of the aforementioned arithmetic progressions representing occurrences of $\exp(\hX_\ell)$ in $Y$.
  Our goal is to test, in bulk, for each position $\oa+i\cdot g \in V$, whether $X$ occurs in $Y$ at position $\oa+i\cdot g -\oc$.
  Then, the only remaining step is to filter out occurrences of $X$ that are not contained within $Y$.
  If $|V|=1$, one can use a single LCE query to verify $Y\fragmentco{\oa-\oc}{\oa-\oc+|X|}=X$.
  Thus, we henceforth assume $|V|>1$.

   \begin{figure}[t]
  \centering

  \begin{minipage}{0.45\textwidth}
\centering
  \begin{tikzpicture}[scale=0.5]
    \draw[dashed] (0,0) -- (2,3);
    \draw[dashed] (8,3) -- (10.5,0);
    \draw (2,3) rectangle (8,4);
    \node at (5,3.5) {\(\hX_\ell\)};

    \node[blue] at (4.5,0.3) {\(X\)};
    \draw[blue, thick] (0,0) -- (10.5,0);

    \def\Radius{0.5}
    \path
      (-\Radius,0) coordinate (A)
      -- coordinate (M)
      (\Radius, 0) coordinate (B)
      (M) +(60:\Radius) coordinate (C)
      +(120:\Radius) coordinate (D)
    ;
    \draw[blue]
      (3,4) arc(0:180:\Radius) -- cycle
    ;
    \node[blue] at (2.5, 5) {\(g\)};

    \draw[violet]
      (8,4) arc(0:180:\Radius) -- cycle
    ;
    \node[violet] at (7.5, 5) {\(g\)};

    \draw[red, thick, {Rays[n=2]}-{Latex[round, red]}] (2,2.4) -- (0,2.4);
    \draw[red, thick,  {Rays[n=2]}-{Latex[round, red]}] (3,2) -- (1,2);
    \node[red] at (2, 1.5) {\(\bar{u}\)};

    \draw[red, thick, {Rays[n=2]}-{Latex[round, red]}] (8,2.4) -- (10,2.4);
    \draw[red, thick,  {Rays[n=2]}-{Latex[round, red]}] (7,2) -- (9,2);
    \node[red] at (8, 1.5) {\(u\)};

    \draw[black, thick] (2,0.2) -- (2,-0.2);
    \node[black] at (2, -0.5) {\(\bar{c}\)};

    \draw[black, thick] (8,0.2) -- (8,-0.2);
    \node[black] at (8, -.5) {\(c\)};

  \end{tikzpicture}
  \caption{The queries that compute $\ou$ and $u$ check how far the period $g$ of $\exp(\hX_\ell)=X\fragmentco{\oc}{c}$ extends within $X$.}\label{fig:u}
  \end{minipage}
  \hspace{1cm}
  \begin{minipage}{0.45\textwidth}
\centering
  \begin{tikzpicture}[scale=0.4]
    \draw[ultra thick, mark=round] (2,3.2) -- (6.5,3.2);
    \draw[ultra thick] (3,3.5) -- (7.5,3.5);
    \draw[ultra thick] (4,3.8) -- (8.5,3.8);

    \def\Radius{0.5}
    \path
      (-\Radius,0) coordinate (A)
      -- coordinate (M)
      (\Radius, 0) coordinate (B)
      (M) +(60:\Radius) coordinate (C)
      +(120:\Radius) coordinate (D)
    ;
    \node[blue] at (5.5, 5) {\(g\)};
    \draw[blue]
      (3,4) arc(0:180:\Radius) -- cycle
    ;
    \draw[blue]
      (4,4) arc(0:180:\Radius) -- cycle
    ;
    \draw[blue]
      (5,4) arc(0:180:\Radius) -- cycle
    ;
    \draw[blue]
      (6,4) arc(0:180:\Radius) -- cycle
    ;
    \draw[blue]
      (7,4) arc(0:180:\Radius) -- cycle
    ;
    \draw[blue]
      (8,4) arc(0:180:\Radius) -- cycle
    ;
    \draw[blue]
    (8.5,4.5) arc(90:180:\Radius) -- (8.5,4)-- cycle;

    \draw[red, thick, {Rays[n=2]}-{Latex[round, red]}] (2,2.4) -- (0,2.4);
    \draw[red, thick,  {Rays[n=2]}-{Latex[round, red]}] (3,2) -- (1,2);
    \node[red] at (2, 1.5) {\(\bar{v}\)};

    \draw[red, thick, {Rays[n=2]}-{Latex[round, red]}] (8.5,2.4) -- (10.5,2.4);
    \draw[red, thick,  {Rays[n=2]}-{Latex[round, red]}] (7.5,2) -- (9.5,2);
    \node[red] at (8.5, 1.5) {\(v\)};

    \node[blue] at (5,1) {\(Y\)};
    \draw[blue] (-1,0.5) -- (11,0.5);

    \draw[black, thick] (2,0.7) -- (2,0.3);
    \node[black] at (2, 0) {\(\bar{a}\)};

    \draw[black, thick] (8.5,0.7) -- (8.5,0.3);
    \node[black] at (8.5, 0) {\(a\)};
  \end{tikzpicture}
  \caption{The thick black lines are the detected occurrences of \(\exp(\hX_\ell)\) in \(Y\); these occurrences start $g$ positions apart and their union is $Y\fragmentco{\oa}{a}$.
  The queries that compute $\ov$ and $v$ check how far the period $g$ of $Y\fragmentco{\oa}{a}$ extends within $Y$.}\label{fig:v}
  \end{minipage}
  \end{figure}

  Let  $a = \oa+(s-1)\cdot g + |\exp(\hX_\ell)|$ be the position of $Y$ immediately following the rightmost occurrence of $\exp(\hX_\ell)$ captured by $V$.
  We ask the following LCE queries; see Figures~\ref{fig:u}~and~\ref{fig:v}:
  \begin{align*}
    \ou &:= \overline{\LCE}_{X}(\oc, \oc+g),&
    u &:= \LCE_X(c, c-g),\\
    \ov &:= \overline{\LCE}_{Y}(\oa, \oa+g),&
    v &:= \LCE_Y(a, a-g).
  \end{align*}

  \subparagraph*{Case 1: \(\ou= \oc\) and \(u= |X|-c\).}
  In this case, the period $g$ of $\exp(\hX_\ell)=X\fragmentco{\oc}{c}$ extends to the entire~$X$.
  Therefore, an occurrence of $\exp(\hX_\ell)$ contained in $Y\fragmentco{\oa}{a}$ extends to an occurrence of $X$ in $Y$ if and only if the latter is contained in $Y\fragmentco{\oa-\ov}{a+v}$, which is the maximal extension of $Y\fragmentco{\oa}{a}$ with period $g$.
  The starting positions of these occurrences of $X$ form an arithmetic progression:
  \[\left\{\oa - \oc + i\cdot g : i\in \left[\left\lceil\tfrac{\max(0,\ou-\ov)}{g}\right\rceil\dd s-\left\lceil\tfrac{\max(0,u-v)}{g}\right\rceil\right)\right\}.\]

  \subparagraph*{Case 2: \(\ou< \oc\).}
  Then, the period $g$ of $\exp(\hX_\ell)=X\fragmentco{\oc}{c}$ breaks at position $X\position{\oc-\ou-1}$.
  The corresponding position within any occurrence of $X$ in $Y$ must also break the period.
  The period of $Y\fragmentco{\oa}{a}$ breaks at position $Y\position{\oa-\ov-1}$.
  Thus, whenoever $X\fragmentco{\oc}{c}$ is aligned within $Y\fragmentco{\oa}{a}$, then $X\position{\oc-\ou-1}$ must be aligned against $Y\position{\oa-\ov-1}$, and the only candidate occurrence is $Y\fragmentco{\oa-\oc+\ou-\ov}{\oa-\oc+\ou-\ov+|X|}$.
  If this is not a valid fragment of $Y$ (due to out-of-bounds indices), then there is no induced occurrence; otherwise, the single candidate can be verified with one LCE query.

  \subparagraph*{Case 3: \(u < |X|-c\).\protect\footnote{If Cases 2 and 3 hold simultaneously, either procedure can be used.}}
  This case is symmetric to the previous one up to reversing $X$ and $Y$.
  The period $g$ of $\exp(\hX_\ell)=X\fragmentco{\oc}{c}$ breaks at position $X\position{c+u}$, whereas the period $g$ of $Y\fragmentco{\oa}{a}$ breaks at position $Y\position{a+v}$.
  The only candidate occurrence is \(Y\fragmentco{a-c+v-u}{a-c+v-u+|X|}\), where these two positions are aligned; if this candidate is a valid fragment, it can be verified with a single LCE query.

  \subparagraph*{Summary.} We were able to construct the set of occurrences of \(X\) in \(Y\) with up to five LCE queries for each of the $\Oh(1)$ arithmetic progressions representing $\Occ(\hX_\ell,Y'_\ell)$.
  Each of these queries can be answered in \(\bigO(r)\) time using Proposition~\ref{prp:lce}.
  Therefore, our verification algorithm has a runtime of \(\bigO(r)\).

\begin{lemma}\label{lem:confirm}
  Given \(\Occ(\hX_\ell, Y'_\ell)\) as a set of \(\bigO(1)\) arithmetic progressions with differences at most $|\hX_\ell|$, we can compute \(\Occ(X,Y)\) in time \(\bigO(r)\).\lipicsEnd
\end{lemma}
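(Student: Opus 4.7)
The plan is to handle each of the $\Oh(1)$ arithmetic progressions in $\Occ(\hX_\ell, Y'_\ell)$ independently and show that each contributes $\Oh(r)$ time. For a fixed progression $V = \{\oa + i\cdot g : i \in \fragmentco{0}{s}\}$ of starting positions of occurrences of $\exp(\hX_\ell)$ in $Y$, a single scan over $\rle(\hX_\ell)$ and $\rle(Y'_\ell)$ yields the parameters $g$, $\oa$, $a$, $\oc$, and $c$ in $\Oh(r)$ time (both run-length encodings have length $\Oh(r)$ by \cref{lem:hx,lem:yp}). If $s = 1$ we simply verify $X = Y\fragmentco{\oa-\oc}{\oa-\oc+|X|}$ with a single $\LCE$ query, so we may assume $s \ge 2$.

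When $s \ge 2$, I would follow exactly the case analysis developed above: compute the four values $\ou, u, \ov, v$ with $\overline{\LCE}$ and $\LCE$ queries on $X$ and $Y$, each answered in $\Oh(r)$ time by \cref{prp:lce} (together with the observation that reversing all right-hand sides of an $r$-round restricted recompression RLSLP for $T$ yields such an RLSLP for $\overline{T}$, enabling $\overline{\LCE}_T$ queries). These four queries determine which of the three cases we are in. In Case~1 (where the period $g$ of $\exp(\hX_\ell)$ already extends across all of $X$), the set of valid starting positions of $X$ in $Y$ forms an arithmetic progression inside $V$ shifted by $\oc$, and its endpoints are determined by the maximal extension $Y\fragmentco{\oa-\ov}{a+v}$ of the run with period $g$, giving the closed-form expression stated above. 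In Cases~2 and~3, the period break inside $X$ forces alignment with the period break in $Y\fragmentco{\oa}{a}$, leaving at most one candidate starting position, which can be verified with one additional $\LCE$ query.

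The main obstacle is the correctness of Cases~2 and~3: I need to argue carefully that when the period breaks strictly inside $X$ (say at $X\position{\oc-\ou-1}$), every genuine occurrence of $X$ inside $Y\fragmentco{\oa-\ov}{a+v}$ must align this particular position of $X$ with the analogous period-breaking position $Y\position{\oa-\ov-1}$ of $Y$, since otherwise the alignment would extend the period $g$ past where it is known to fail on at least one side. This pins down a unique candidate, after which a single $\LCE$ query confirms or rejects it. Finally, I would filter the resulting candidates so that only those fully contained in the fragment $Y$ are reported; since the output is described as $\Oh(1)$ arithmetic progressions, trimming endpoints takes constant time per progression. Summing up, each progression costs $\Oh(1)$ $\LCE$ queries plus $\Oh(r)$ bookkeeping, for a total of $\Oh(r)$ time.
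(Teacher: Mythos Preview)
Your proposal is correct and follows essentially the same approach as the paper: the text of Section~\ref{sec:verify} preceding the lemma statement \emph{is} the paper's proof, and you have faithfully summarized it, including the conversion of each progression to positions in $Y$, the single-LCE check when $s=1$, the four LCE queries $\ou,u,\ov,v$ when $s\ge 2$, the three-way case split, and the final trimming to $Y$. Your added remark that the correctness of Cases~2 and~3 hinges on the forced alignment of the period-breaking position in $X$ with that in $Y$ is exactly the paper's (terse) justification, and your observation about reversing right-hand sides to support $\overline{\LCE}$ queries matches the paper's comment after \cref{prp:lce}.
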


We note that Lemma~\ref{lem:confirm} is similar to \cite[Lemma 1.14 (b)]{KRRW23}. Both answer a set of LCE queries in a periodic string while only interested in the indices of the largest results. Since the LCE queries in our compressed setting are slower than in theirs, we get a larger runtime in total.

\bibliographystyle{alphaurl}
\bibliography{paper}

\end{document}